\def\ms{{\mathcal S_{n,\mathbf{\lambda}}}}
\newcommand{\mP}{\mathbb{P}}
\newcommand{\mE}{\mathbb{E}}
\newcommand{\nn}{\nonumber}
\newtheorem{theorem}{\textbf{Theorem}}
\newtheorem{lemma}{\textbf{Lemma}}
\newtheorem{proposition}{\textbf{Proposition}}
\begin{document}
%
\title{Quickest Change Detection in Anonymous Heterogeneous Sensor Networks}
\author{Zhongchang Sun \quad Shaofeng Zou \quad Ruizhi Zhang  \quad  Qunwei Li 
	\thanks{This paper was presented in part at the 2020 IEEE International Conference on Acoustics, Speech and Signal Processing \cite{sun2020icassp}.}\thanks{ The work of Z. Sun and S. Zou was supported in part by NSF grants CCF-1948165 and ECCS-2112693. The work of R. Zhang was supported in part by NSF grant ECCS-2112740.}
	\thanks{Zhongchang Sun and Shaofeng Zou are with the Department of Electrical Engineering, University at Buffalo, Buffalo, NY 14228 USA (e-mail: \href{mailto:zhongcha@buffalo.edu}{zhongcha@buffalo.edu}, \href{mailto:szou3@buffalo.edu}{szou3@buffalo.edu}). Ruizhi Zhang is with the Department of Statistics, University of Nebraska-Lincoln, Lincoln, NE 68588 USA (e-mail: \href{mailto:rzhang35@unl.edu}{rzhang35@unl.edu}). Qunwei Li is with the Ant Group, Hangzhou, China (e-mail: \href{mailto:qunwei.qw@antfin.com}{qunwei.qw@antfin.com})}
}
\maketitle
\begin{abstract}
The problem of quickest change detection (QCD) in anonymous heterogeneous sensor networks is studied. There are $n$ heterogeneous sensors and a fusion center. The sensors are clustered into $K$ groups, and different groups follow different data-generating distributions. At some unknown time, an event occurs in the network and changes the data-generating distribution of the sensors. The goal is to detect the change as quickly as possible, subject to false alarm constraints. The anonymous setting is studied, where at each time step, the fusion center receives $n$ unordered samples, and the fusion center does not know which sensor each sample comes from, and thus does not know its exact distribution. A simple optimality proof is first derived for the mixture likelihood ratio test, which was constructed  and proved to be optimal for the non-sequential anonymous setting in \cite{anonymous}. For the QCD problem, a mixture CuSum algorithm is further constructed, and is further shown to be optimal under Lorden's criterion. For large networks, a computationally efficient test is proposed and a novel theoretical characterization of its false alarm rate is developed. Numerical results are provided to validate the theoretical results.
\end{abstract}

\begin{IEEEkeywords}
Hypothesis testing, mixture CuSum, sequential change detection, computationally efficient, optimal.
\end{IEEEkeywords}

\section{Introduction}
\IEEEPARstart{I}n quickest change detection (QCD) problem \cite{veeravalli2013quickest,tartakovsky2014sequential,poor-hadj-qcd-book-2009,detectAbruptChange93,Siegmund1985,tartakovsky2019sequential,xie2021sequential}, a decision maker collects samples sequentially from a stochastic environment. At some unknown time, an event occurs and causes a change in the data-generating distribution. The goal of the decision maker is to detect the change as quickly as possible subject to a constraint on the false alarm. The QCD problem in sensor networks has been widely studied in the literature \cite{tartakovsky2004change,tartakovsky2006novel,mei2010efficient,xie2013sequential,fellouris2016second,raghavan2010quickest, he2006nonparametric, dechade2018, hadjiliadis2009one,ludkovski2012bayesian,zou2020dynamic,veeravalli2001decentralized,tartakovsky2008distributed,zou2019distributed}. In these studies,  it is usually assumed that the fusion center knows which sensor that each sample comes from, and thus the statistical property of the sample is known.
However, in a wide range of modern practical applications,  the nodes are anonymous and heterogeneous. In this case, only unordered and anonymous samples are available to the fusion center, and the fusion center doesn't know what data generating distribution that each sample follows.

In this paper, we investigate the QCD problem using anonymized samples. We consider a general scenario with heterogeneous sensors, where the sensors can be clustered into $K$ groups with different data generating distributions, and the fusion center does not know which sensor each sample comes from. At some unknown time, an event occurs in the network, and changes the data-generating distribution of the nodes. The goal is to detect the change as quickly as possible subject to false alarm constraints using anonymized samples (see Fig.~\ref{fig:model}). 

\begin{figure}[!htb]
	\centering
	\includegraphics[width=0.73\linewidth]{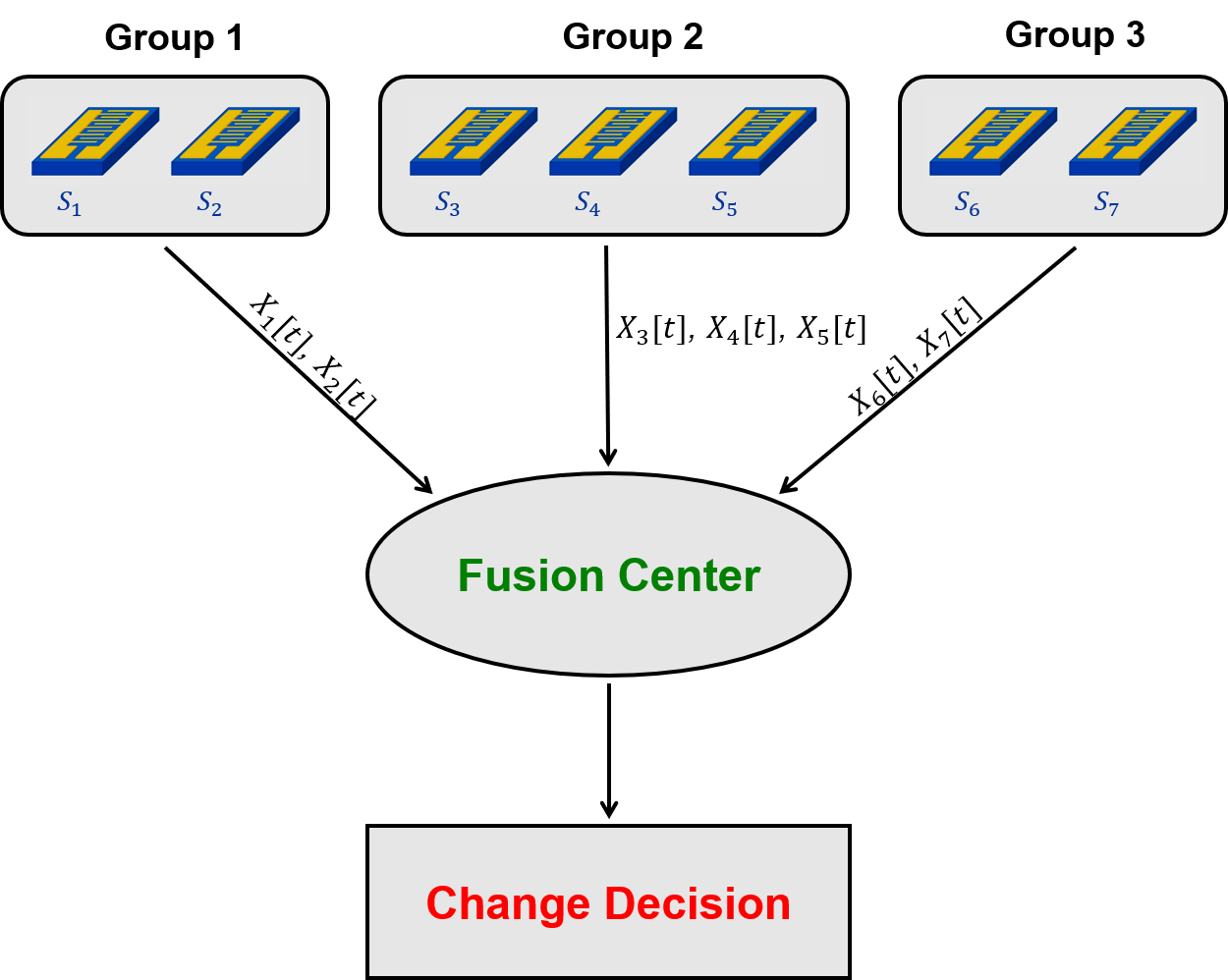}
	\caption{QCD in anonymous heterogeneous sensor network. Sensors in different groups have different distributions. At each time $t$, the fusion center collects unordered samples $X_1[t],\cdots, X_7[t]$ from sensors $S_1,\cdots,S_7$. }
	\label{fig:model}
\end{figure}

Statistical inference for anonymous and heterogeneous sensor networks finds a wide range of practical applications. For example, in large-scale Internet of things (IoT) networks \cite{luigi2010internet, rao2012cloud, li2013compressed, fuqaha2015iot}, devices are commonly small and low-cast sensing devices powered by battery, and are usually deployed in a massive scale. In such large-scale networks, the communication cost of identifying individual sensors increases drastically as the number of sensors grows \cite{anonymous}, which is not affordable for battery powered small IoT sensing devices that are expected to survive for years without battery change. Moreover, sensors in IoT networks are usually heterogeneous for various industrial and consumer applications, e.g., pressure sensor, light sensor, temperature sensor, humidity sensor, seismic sensor and electrochemical sensor. The same type of sensors deployed in different regions also exhibits heterogeneity in their data generating distributions, e.g., electrochemical sensors that are near to or far away from the air pollution source  and climate sensors on different sides of the same mountain. 
The second example is crowdsourcing, which is an evolving distributed problem-solving and business production model \cite{yuen2011crowd, enrique2012towards, brabham2013crowdsourcing}. Crowdsourcing aims to collect data, ideas, micro-tasks from a large and relatively open group of people. With human participants, anonymity is necessarily needed to protect privacy \cite{zhou2008brief, cummings2018differentially,dwork2008differential,dwork2014algorithmic, horey2007anonymous,groat2011kipda,alomair2010statistical,Wadaa2004}. Based on their skill level and background, e.g., education, country, and language, participants can be divided into groups that are heterogeneous. QCD finds a wide range of applications in these networks, e.g., environmental change (air/water quality) monitoring,  fake news detection in social networks, pandemic outbreak detection and seismic wave detection. In these applications, a change in the data-generating distributions occurs due to an abrupt event which is of interest to be detected quickly.

\subsection{Related Works}
The problem studied in this paper is closely related to the problem of QCD under the multiple-channel setup \cite{tartakovsky2004change,tartakovsky2006novel,mei2010efficient,xie2013sequential,fellouris2016second,raghavan2010quickest, hadjiliadis2009one,ludkovski2012bayesian,zou2020dynamic,veeravalli2001decentralized,tartakovsky2008distributed,zou2019distributed}, where samples are collected from multiple sensors sequentially, and the goal is to detect a change in the data-generating distribution of some unknown subset of the sensors or all the sensors. These works assume that the sensors are non-anonymous, i.e., it is known that which sensor that each sample comes from. In the non-anonymous setting, algorithms can be designed by combining the CuSum statistics each calculated for one sensor. These algorithms inherit  the nice property of the CuSum algorithm \cite{page1954continuous} which can be updated in an online and recursive fashion, and thus is computationally efficient. It was shown in these studies that such a type of algorithms are asymptotically optimal for various scenarios.
In this paper, we are interested in the anonymous setting, where at each time step the fusion center receives $n$ arbitrarily permuted (unordered) observations, and the permutations at different time steps may be different due to anonymity. Then, the fusion center does not know which samples over time come from one particular sensor. Therefore, existing approaches based on the idea of combining local CuSum statistics are not applicable any more since the fusion center is not able to compute one CuSum statistic for each node. 

In anonymous networks, the fusion center does not know the exact distribution of each sample due to the uncertainty caused by the anonymity  \cite{anonymous}. The group label that assigns the samples to different types of sensors is an unknown parameter of the distribution. Therefore, the QCD problem in anonymous networks can be viewed as a composite QCD problem with unknown pre- and post-change distributions \cite{siegmund1995using,lai1998information,banerjee2015composite,zou2018quickest,mei2006sequential,Brodsky2005,pergamenchtchikov2019asymptotically}, 
The main difference lies in that the unknown parameter in our problem is changing with time, i.e., the group label may not be the same at different time steps, and thus the samples are not identically distributed in the pre- or post-change regime. As will be shown in our numerical results, a generalized likelihood ratio based test does not work well here. Furthermore, we do not assume that the distributions belong to any parameterized family of distributions, e.g., exponential family.

The problem of quickest detection of a moving anomaly was studied in \cite{rovatsos2019dynamic,rovatsos2020quickest}, where an unknown sensor is affected by an anomaly  with an unknown trajectory that emerges in the network at some unknown time.  In \cite{rovatsos2019dynamic}, the statistical behavior of the samples is modeled using a hidden Markov model \cite{fuh2018asymptotic}, and the trajectory is modeled as a deterministic and unknown one in \cite{rovatsos2020quickest}. Our work is different from the one in \cite{rovatsos2019dynamic} since we do not put any assumption on the prior of group label (trajectory of the anomaly in \cite{rovatsos2019dynamic}). The study in \cite{rovatsos2020quickest} is related to ours in that the samples can be equivalently viewed as being collected from anonymous sensors since the node affected by the anomaly is unknown.

The offline hypothesis testing problem in the anonymous setting was investigated in \cite{anonymous}, where one sample is collected from each sensor. A mixture likelihood ratio test (MLRT) was  developed, and was further shown to be optimal under the Neyman-Pearson setting. Here, we consider the QCD problem under the anonymous setting with sequential samples and time-varying group labels. We are interested in the tradeoff between the false alarm rate and average detection delay, which requires construction of sequential tests and more involved optimality analysis. 

In Table \ref{table:summary}, we summarize the difference between our paper and other related works. We note that the fusion center may be able to recover the group identity if it performs, e.g.,  hypothesis testing, and the error probability depends on the distance between the distributions of different groups. With unordered samples, perfect anonymity can only be guaranteed if distributions among different groups are exactly the same. Designing optimal mechanisms to achieve perfect anonymity is not the focus of this paper, and might be of independent interest. In this paper, we focus on the design of optimal quickest detection algorithms for the scenario with unordered samples.

\begin{table*}[!htbp]
	\centering	
	\begin{tabular}{|c|c|}
		
		\hline
		Existing studies&Our paper\\\cline{1-2}
		\cite{tartakovsky2004change,tartakovsky2006novel,mei2010efficient,xie2013sequential,fellouris2016second,raghavan2010quickest, hadjiliadis2009one,ludkovski2012bayesian,zou2020dynamic,veeravalli2001decentralized,tartakovsky2008distributed,zou2019distributed}: QCD problem in sensor networks, sensors are & Sensors are anonymous, i.e., fusion center does not know\\
		non-anonymous &which sensor each sample comes from\\\cline{1-2}
		
		\cite{siegmund1995using,lai1998information,banerjee2015composite,zou2018quickest,mei2006sequential,Brodsky2005,pergamenchtchikov2019asymptotically}:Composite QCD problem with unknown pre- and &Unknown parameters change with time, and generalized likelihood \\ 
		post-change distributions &ratio based test is not applicable\\
		\cline{1-2}
		
		\cite{rovatsos2019dynamic, rovatsos2020quickest}: quickest detection of moving anomaly & Samples are unordered at each time step\\ \cline{1-2}
		
		\cite{anonymous}: offline binary hypothesis testing problem in anonymous networks & Samples are sequentially collected\\
		\hline
	\end{tabular}
	\vspace{0.1cm}\caption{Comparison of related works}\label{table:summary}
\end{table*}
\subsection{Main Contributions}
We first revisit the non-sequential hypothesis testing problem with anonymous heterogeneous sensors. We provide a simple proof for the optimality of the MLRT \cite{anonymous}. The basic idea is to construct a binary composite hypothesis testing problem with uniform priors on all possible group labels, and to show that the optimal test for the case with a uniform Bayesian prior is also optimal under the minimax setting.

For the QCD problem in anonymous networks, we design a mixture CuSum algorithm, and prove that the mixture CuSum algorithm is exactly optimal under Lorden's criterion \cite{lorden1971procedures}. To show its exact optimality, we build a novel connection among several simple QCD problems and the QCD problem under the anonymous setting. The major challenge in our analysis is due to that we are optimizing the worst-case performance over all possible change-point, group labels and pre-change observations. 

The computational complexity of the mixture CuSum algorithm at each time step increases almost exponentially in the number of nodes, and thus is not efficient when the network is large. We then propose a computationally efficient test based on the asymptotic behavior of the mixture CuSum test statistic when the network is large. The basic idea is to approximate the mixture CuSum statistic by a convex optimization problem with linear constraints, the computational complexity of which is independent of the number of sensors. We provide a comprehensive discussion of its performance. We also derive a lower bound on its worst-case average run length to false alarm, so that a threshold can be chosen analytically for false alarm control in practice.


We provide numerical results to demonstrate the performance of our proposed algorithms. We compare our algorithms with two other heuristic algorithms based on the Bayesian approach and the generalized likelihood ratio approach, and show that our mixture CuSum has the best performance, and our computationally efficient test also performs better than the other two tests. We also compare their computational complexity, and show that our the computationally efficient algorithm is much more efficient for large networks.

\subsection{Paper Organization}
 In Section \ref{sec:2problemmodel}, we present the problem formulation. In Section \ref{sec:3anonbinHT}, we revisit the binary hypothesis testing problem in the anonymous setting, and derive a simple optimality proof for the MLRT. In Section \ref{sec:4twoalogrithm}, we develop the mixture CuSum algorithm, and prove its exact optimality under Lorden's criterion. We further develop a computationally efficient test and characterize its performance theoretically. In Section \ref{sec:7numerical}, we provide numerical results to validate our theoretical assertions. In Section \ref{sec:8conclusion}, we present some concluding remarks.

\section{Problem Formulation}\label{sec:2problemmodel}
Consider a network consisting of $n$ sensors. The sensors are heterogeneous and can be divided into $K$ groups. Each group $k$ has $n_k$ sensors, $1\leq k\leq K$. 
The distributions of the observations in group $k$ are $p_{\theta,k}$,  $\theta\in \{0,1\}$. 
Let $\bm P_\theta = [p_{\theta,1}\cdots p_{\theta,K}]^T$.
We assume that $\bm\alpha^T\bm P_0 \neq \bm\alpha^T\bm P_1$. 
The centralized setting is considered, where there is a fusion center.
The sensors are anonymous, i.e., the fusion center does not know which group of sensors that each observation comes from. 
The fusion center only knows the distributions $p_{\theta,k}$, $\theta\in \{0,1\}$ and the number of sensors $n_k$ in each group $k$. 

\subsection{Binary Composite Hypothesis Testing}\label{sec:modelHT}
We first revisit the binary hypothesis testing problem in \cite{anonymous}. The goal is to distinguish between the two hypotheses: $\mathcal H_0: \theta=0$ and $\mathcal H_1: \theta=1$. 

Denote by $X^n=\{X_1,\ldots,X_n\}$ the $n$ collected samples. 
Denote by $\sigma(i)\in\{1,\ldots,K\}$ the label of the group that $X_i$ comes from, i.e., $X_i\sim p_{\theta, \sigma(i)}$. Due to the anonymity, $\sigma(i)$, $i=1,\ldots,n$, are \textit{unknown} to the fusion center.  There are $\left(\substack{n\\n_1,\ldots,n_K}\right)$ possible $\sigma: \{1,\ldots,n\}\rightarrow\{1,\ldots,K\}$ satisfying
$|\{i:\sigma(i)=k\}|=n_k,\forall k=1,\ldots,K.$
We denote the collection of all such labels by $\mathcal S_{n,\mathbf{\lambda}}$, where $\mathbf\lambda=\{n_1,\ldots,n_K\}$.

Given $\sigma$, the $n$ collected samples are assumed to be independent. The problem is a composite hypothesis testing problem, where $\sigma$ is the unknown parameter for both $\theta=0$ and $1$:
\begin{flalign}
\mathcal H_\theta: X^n\sim \mP_{\theta,\sigma}\overset{\Delta}{=}\prod_{i=1}^np_{\theta,\sigma(i)}, \text{ for some } \sigma\in\mathcal S_{n,\mathbf{\lambda}}.
\end{flalign}
The worst-case type-\uppercase\expandafter{\romannumeral1} and type-\uppercase\expandafter{\romannumeral2} error probabilities for a decision rule $\phi$ are defined as 
\begin{flalign}
P_F(\phi) &\triangleq \max\limits_{\sigma\in \ms}\mathbb{E}_{{0,\sigma}}[\phi(X^n)] ,\\
P_M(\phi) &\triangleq \max\limits_{\sigma\in \ms} \mathbb{E}_{{1,\sigma}}[1-\phi(X^n)],
\end{flalign}
where $\mathbb{E}_{{\theta,\sigma}}$ denotes the expectation under $\mathbb{P}_{{\theta,\sigma}}$, for $\theta \in \{0,1\}$ and $\sigma\in\mathcal S_{n,\mathbf{\lambda}}$.
The Neyman-Pearson setting is studied, where the goal is to solve the following problem for any $\zeta\in[0,1]$: 
\begin{flalign}\label{eq:problem}
\inf_{\phi:P_F(\phi) \leq \zeta} P_M(\phi).
\end{flalign}

\subsection{Quickest Change Detection}\label{sec:modelqcd}
In the QCD setting, anonymized samples are observed sequentially. At some unknown time $\nu$, an event occurs in the network, and changes the data-generating distributions of the sensors. Specifically, denote the $i$-th sample at time $t$ by $X^n_i[t]$ and all the observed samples at time $t$ by $X^n[t]$. Before the change, i.e., $t<\nu$, $X^n[t]\sim \mP_{0,\sigma_t}, \text{ for some \textit{unknown} } \sigma_t\in\mathcal S_{n,\mathbf{\lambda}}.$ After the change, i.e., $t\geq \nu$, $X^n[t]\sim \mP_{1,\sigma_t}, \text{ for some \textit{unknown} } \sigma_t\in\mathcal S_{n,\mathbf{\lambda}}.$
We note that $\sigma_t$ may change with time, i.e., $\sigma_{t_1}$ may not be the same as $\sigma_{t_2}$, for $t_1\neq t_2$. We assume that for any $t\geq 0$, given $\sigma_t$, the samples in $X^n[t]$ are independent. We further assume that $X^n[{t_1}]$ is independent from $X^n[{t_2}]$ for any $t_1\neq t_2$.

The objective is to detect the change at time $\nu$ as quickly as possible subject to false alarm constraints. 
In this paper, we consider a deterministic unknown change point $\nu$. We define the worst-case average detection delay (WADD) under Lorden's criterion \cite{lorden1971procedures} and worst-case average run length (WARL) for any stopping time $\tau$ as follows:
\begin{flalign}\label{eq:5}
&\text{WADD}(\tau) \triangleq\sup_{\nu\geq 1}\sup_{\Omega}\text{ess}\sup \mathbb{E}^\nu_{\Omega}\left[(\tau-\nu)^+|\mathbf X^n[1,\nu-1]\right],\nn\\
&\text{WARL}(\tau)\triangleq \inf_{\Omega}{\mathbb{E}^\infty_{\Omega}[\tau]},
\end{flalign}
where $\Omega = \{\sigma_1, \sigma_2, ..., \sigma_\infty\}$, $\mathbb{E}^\nu_{\Omega}$ denotes the expectation when the change is at $\nu$, and the observations at time $t$ are labeled according to $\sigma_t$, and $\mathbf X^n[1,\nu-1]=\{X^n[1],\ldots,X^n[\nu-1]\}$.

The goal is to design a stopping rule that minimizes the WADD subject to a constraint on the WARL:
\begin{flalign}\label{eq:qcd}
\inf_{\tau:\text{WARL}(\tau)\geq \gamma} \text{WADD}(\tau).
\end{flalign}

\subsection{Notations}
In this section, we list the notations used in this paper.
\begin{itemize}
	\item $n$ denotes the number of sensors, $K$ denotes the number of groups and $n_k$ denotes the number of sensor in group $k$.
	\item $\bm{\alpha} = [\alpha_1\cdots\alpha_K]^T$, where $\alpha_k =\lim_{n\rightarrow \infty} \frac{n_k}{n}$ denotes the asymptotic fraction of sensors of group $k$.
	\item $\sigma(i)\in\{1,\ldots,K\}$ denotes the label of the group that $X_i$ comes from, i.e., $X_i\sim p_{\theta, \sigma(i)}$, and $\mathcal S_{n,\mathbf{\lambda}}$ denotes the collection of all $\sigma(i)$, where $\mathbf\lambda=\{n_1,\ldots,n_K\}$.
	\item $H(\bm\alpha)$ denotes the entropy of $\bm\alpha$.
	\item $\Pi_{X^n}$ denotes the empirical distribution of samples $X^n$, and $T(\Pi_{X^n})$ denotes the type class of $\Pi_{X^n}$.
	\item $\mathcal{P}_n$ denotes the set of types with denominator $n$.
	\item $D(P\big|\big|Q)$ denotes the Kullback-Leibler (KL) divergence between $P$ and $Q$.
	\item $f(x) \sim g(x)$ as $x\rightarrow x_0$ if $f(x) = g(x)(1+o(1))$ as $x\rightarrow x_0$.		
\end{itemize}

\section{MLRT and A Simple Optimality Proof}\label{sec:3anonbinHT}
\label{sec:format}
For the binary composite hypothesis testing problem in Section \ref{sec:modelHT}, Chen and Huang constructed a mixture likelihood ratio test (MLRT), and showed that the MLRT is optimal under the Neyman-Pearson setting in \eqref{eq:problem} \cite{anonymous}. In this section, we first briefly review the optimality proof in \cite{anonymous}, and then we present a simple version of the proof.

Define the mixture likelihood ratio $\ell(X^n)$ as follows:
\begin{flalign}
\ell(X^n)=\frac{\sum_{\sigma\in\ms}\mP_{1,\sigma}(X^n)}{\sum_{\sigma\in\ms}\mP_{0,\sigma}(X^n)}.
\end{flalign}
Then the MLRT  was defined in \cite{anonymous} as 
\begin{flalign}\label{eq:mlrt}
\phi^*(X^n)=\left\{\begin{array}{ll}
    1, &\text{ if } \ell(X^n)> \eta \\
    \beta, &\text{ if } \ell(X^n)= \eta \\
    0, &\text{ if } \ell(X^n)< \eta,
\end{array}\right.
\end{flalign}
where $\beta\in[0,1]$, $\eta$ is the threshold, and they are chosen to meet the false alarm constraint. 

\begin{lemma}\label{lemma:1}\cite[Thm. 3.1]{anonymous}
The MLRT $\phi^*$ is optimal for \eqref{eq:problem}. 
\end{lemma}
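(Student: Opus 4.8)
The plan is to realize $\phi^*$ as the Bayes-optimal test for an auxiliary problem in which a uniform prior is placed on the unknown label $\sigma$, and then to use a permutation-symmetry argument to transfer optimality from the Bayesian formulation to the minimax formulation \eqref{eq:problem}. In effect the uniform prior on $\ms$ will play the role of a least-favorable prior, but the transfer is especially clean here because the worst case over $\ms$ coincides with the average.

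\emph{Step 1 (Bayesian reformulation).} Put the uniform prior $\mathrm{Unif}(\ms)$ on $\sigma$ under both $\mathcal H_0$ and $\mathcal H_1$, so that the marginal laws of $X^n$ become $\bar{\mP}_\theta \triangleq \frac{1}{|\ms|}\sum_{\sigma\in\ms}\mP_{\theta,\sigma}$, $\theta\in\{0,1\}$. The likelihood ratio $\bar{\mP}_1(X^n)/\bar{\mP}_0(X^n)$ is exactly $\ell(X^n)$, since the normalizing constants $1/|\ms|$ cancel. By the classical Neyman–Pearson lemma, $\phi^*$ of \eqref{eq:mlrt}, with $\eta$ and $\beta$ chosen so that $\mathbb{E}_{\bar{\mP}_0}[\phi^*]$ equals a prescribed level (attainable with equality thanks to the randomization $\beta$), minimizes $\mathbb{E}_{\bar{\mP}_1}[1-\phi]$ among all tests $\phi$ with $\mathbb{E}_{\bar{\mP}_0}[\phi]$ at most that level.

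\emph{Step 2 (symmetry: worst case $=$ average).} For every permutation $\pi$ of $\{1,\dots,n\}$ the map $\sigma\mapsto\sigma\circ\pi$ is a bijection of $\ms$, so $\ell$ — hence $\phi^*$ — is invariant under permutations of $(X_1,\dots,X_n)$; equivalently $\phi^*$ depends on $X^n$ only through $\Pi_{X^n}$. Under $\mP_{0,\sigma}$ the multiset $\{X_1,\dots,X_n\}$ (equivalently $\Pi_{X^n}$) has the same law for every $\sigma\in\ms$, because $\sigma$ only permutes which coordinate is drawn from which $p_{0,k}$ while the collection of assigned distributions is always $n_1$ copies of $p_{0,1},\ldots,n_K$ copies of $p_{0,K}$. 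Hence $\mathbb{E}_{0,\sigma}[\phi^*]$ is constant in $\sigma$, so $P_F(\phi^*)=\mathbb{E}_{\bar{\mP}_0}[\phi^*]$, and the identical argument gives $P_M(\phi^*)=\mathbb{E}_{\bar{\mP}_1}[1-\phi^*]$. Choosing $(\eta,\beta)$ so that $P_F(\phi^*)=\zeta$ thus makes $\phi^*$ feasible for \eqref{eq:problem} and sets its Bayesian type-\uppercase\expandafter{\romannumeral1} error to $\zeta$.

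\emph{Step 3 (transfer).} For any competitor $\phi$ with $P_F(\phi)\le\zeta$, its averaged type-\uppercase\expandafter{\romannumeral1} error satisfies $\mathbb{E}_{\bar{\mP}_0}[\phi]=\frac{1}{|\ms|}\sum_{\sigma}\mathbb{E}_{0,\sigma}[\phi]\le\max_{\sigma}\mathbb{E}_{0,\sigma}[\phi]=P_F(\phi)\le\zeta$, so $\phi$ is admissible in the Bayesian problem of Step 1. Step 1 then yields $\mathbb{E}_{\bar{\mP}_1}[1-\phi^*]\le\mathbb{E}_{\bar{\mP}_1}[1-\phi]\le\max_{\sigma}\mathbb{E}_{1,\sigma}[1-\phi]=P_M(\phi)$, and by Step 2 the left-hand side equals $P_M(\phi^*)$; hence $P_M(\phi^*)\le P_M(\phi)$, which is the claimed optimality. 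The one delicate point — and the heart of this short proof — is Step 2, namely that the symmetric statistic $\ell$ has a $\sigma$-free law under each hypothesis, which collapses the maxima over $\ms$ in the definitions of $P_F$ and $P_M$ onto single values equal to the corresponding averages; the rest is the Neyman–Pearson lemma applied to the mixtures $\bar{\mP}_0,\bar{\mP}_1$.
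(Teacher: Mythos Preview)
Your proof is correct and follows essentially the same approach as the paper's own proof: both place a uniform prior on $\sigma$ to reduce to a simple hypothesis test (for which Neyman--Pearson gives the MLRT), use the permutation symmetry of $\phi^*$ to show that $\mathbb{E}_{\theta,\sigma}[\phi^*]$ is constant in $\sigma$ so that $P_F(\phi^*)=\widetilde P_F(\phi^*)$ and $P_M(\phi^*)=\widetilde P_M(\phi^*)$, and then use the trivial inequality $\widetilde P_F(\phi)\le P_F(\phi)$, $\widetilde P_M(\phi)\le P_M(\phi)$ to transfer Bayesian optimality to the minimax problem. Your justification of the symmetry step via the $\sigma$-free law of the multiset $\{X_1,\dots,X_n\}$ is equivalent to the paper's change-of-variables argument $\mE_{\theta,\sigma}[\phi^*(\pi(X^n))]=\mE_{\theta,\sigma\circ\pi}[\phi^*(X^n)]$.
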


The key idea of the proof in \cite{anonymous} is to reduce the original composite hypothesis testing problem in Section \ref{sec:modelHT} into a simple one through the ordering map $\Pi(X^n)$, and then apply the Neyman-Pearson lemma. The ordering map $\Pi(X^n)$ of $X^n$ is defined as $\Pi(X^n)=(X_{i_1},X_{i_2},\ldots,X_{i_n})$, such that $X_{i_1}\geq X_{i_2}\geq\ldots\geq X_{i_n}$. 
In the proof, due to the introduction of the ordering map, a careful examination of the measurability  needs to be conducted.
The proof in \cite{anonymous} can be summarized by the following steps. 1) In the auxiliary space induced by the ordering mapping, the induced probability measure is independent of $\sigma$, and thus the corresponding problem in the auxiliary space is a simple hypothesis testing problem. 2) In the auxiliary space,  applying the Neyman-Pearson lemma, the optimal test is obtained. 
3) Any symmetric test in the original sample space is equivalent to a test in the auxiliary space in terms of type-I and type-II error probabilities, where a test $\phi$ is symmetric if $\phi(x^n)=\phi(\pi(x^n))$ for any $x^n$ and any permutation $\pi$. 
4) The optimal test in the auxiliary space is the MLRT and is symmetric, which means that among all symmetric tests, the MLRT is optimal.   5) For any test $\psi$, one can always symmetrize it and construct a symmetric test $\phi$, which is as good as $\psi$.  6) Then, the MLRT is optimal among all tests.

In the following, we present a simple proof for the optimality of the MLRT. Our proof does not need to use the ordering map, and is much simpler.
\begin{proof}
We consider a Bayesian setting with a uniform prior on $\sigma$ under both hypotheses, 
and define the  average type-\uppercase\expandafter{\romannumeral1} and type-\uppercase\expandafter{\romannumeral2} error probabilities  for any test $\phi$:

\begin{flalign}
\widetilde{P}_F(\phi) &\triangleq \frac{1}{\mid \ms\mid}\sum_{\sigma\in \ms} \mathbb{E}_{{0,\sigma}}[\phi(X^n)] ,\label{eq:7}\\
\widetilde{P}_M(\phi) &\triangleq \frac{1}{\mid \ms\mid}\sum_{\sigma\in \ms} \mathbb{E}_{{1,\sigma}}[1-\phi(X^n)]. \label{eq:8}
\end{flalign}
Then under the Bayesian setting, this problem reduces to the following simple binary hypothesis testing problem:
\begin{flalign}
\mathcal H_0: \frac{1}{\mid \ms\mid}\sum\limits_{\sigma\in \ms}\mathbb{P}_{0,\sigma},\label{eq:h0bayesin}\\
\mathcal H_1: \frac{1}{\mid \ms\mid}\sum\limits_{\sigma\in \ms}\mathbb{P}_{1,\sigma},\label{eq:h1bayesin}
\end{flalign}
for which the optimal test (the same as the MLRT) is the likelihood ratio test between \eqref{eq:h0bayesin} and \eqref{eq:h1bayesin} \cite{moulin2018statistical}.

It can be verified that for any permutation $\pi(X^n)=(X_{\pi(1)},X_{\pi(2)},\ldots,X_{\pi(n)})$, $\phi^*(X^n)=\phi^*(\pi(X^n))$. For any $\pi$, let $\sigma'=\sigma\circ\pi$, where ``$\circ$'' denotes the composition of two functions, i.e., $f\circ g(x) = f(g(x))$. Then $\mE_{\theta,\sigma}[\phi^*(\pi(X^n))]=\mE_{\theta,\sigma\circ\pi}[\phi^*(X^n)]=\mE_{\theta,\sigma'}[\phi^*(X^n)]$.  For any $\sigma'\in\ms$, a $\pi$ can be found so that $\sigma\circ\pi=\sigma'$. Thus, for any $\sigma,\sigma'\in\ms$ and $\theta=0,1$,
\begin{flalign}
\mE_{\theta,\sigma'}[\phi^*(X^n)]=\mE_{\theta,\sigma}[\phi^*(X^n)].
\end{flalign}
It then follows that
\begin{flalign}
P_F(\phi^*) &= \max\limits_{\sigma\in \ms}\mathbb{E}_{{0,\sigma}}[\phi^*(X^n)] = \mathbb{E}_{0,\sigma}[\phi^*(X^n)] \nn\\
&= \frac{1}{\mid \ms\mid}\sum\limits_{\sigma\in \ms} \mathbb{E}_{{0,\sigma}}[\phi^*(X^n)]  = \widetilde{P}_F(\phi^*).
\end{flalign}
Similarly, it can be shown that $P_M(\phi^*)=\widetilde{P}_M(\phi^*)$.

From \eqref{eq:7} and \eqref{eq:8}, it follows that for any test $\phi$, 
\begin{flalign}\label{eq:12}
\widetilde{P}_F(\phi)&\leq {P}_F(\phi),\nn\\
\widetilde{P}_M(\phi)&\leq {P}_M(\phi). 
\end{flalign}
 Since $\phi^*$ is optimal for the problem of minimizing $\widetilde{P}_M(\phi)$ subject to $\widetilde{P}_F(\phi)\leq \epsilon$, then $\phi^*$ is also optimal for problem of minimizing ${P}_M(\phi)$ subject to ${P}_F(\phi)\leq \epsilon$. 
\end{proof}

\section{Mixture CuSum Algorithm and A Computationally Efficient Test}\label{sec:4twoalogrithm}

\subsection{Mixture CuSum Algorithm}\label{sec:exopmixture}
Motivated by the fact that the MLRT is optimal for the  binary composite hypothesis testing problem, we construct the following mixture CuSum algorithm:
\vspace{-0.2cm}
\begin{equation}\label{eq:mcusum}
    \tau^*(b) = \inf\Big\{t: \max\limits_{1\leq j\leq t}\sum_{i=j}^t \log\ell(X^n[i])\geq b\Big\}.
\end{equation}
Let $W[t] = \max\limits_{1\leq j\leq t}\sum_{i=j}^t \log\ell(X^n[i])$. The test statistic $W[t]$ has the following recursion:
\begin{flalign}
W[t+1] = (W[t])^+ + \log\ell(X^n[t+1]), W_0 = 0.
\end{flalign} 
The following theorem shows that the mixture CuSum algorithm is exactly optimal under Lorden's criterion \cite{lorden1971procedures} in \eqref{eq:qcd}.

\begin{theorem}\label{theorem:1}
Consider the QCD problem in Section \ref{sec:modelqcd}, the mixture CuSum algorithm in \eqref{eq:mcusum} is exactly optimal under Lorden's criterion in \eqref{eq:qcd}.
\end{theorem}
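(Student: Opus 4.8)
The plan is to reduce the anonymous QCD problem to a standard simple-versus-simple QCD problem for which Moustakides' classical optimality result for CuSum applies, and then transfer both the WADD and WARL of the mixture CuSum back and forth across this reduction. First I would observe that, exactly as in the binary hypothesis testing case treated in Lemma~\ref{lemma:1}, the mixture likelihood ratio $\ell(X^n[t])$ is a permutation-invariant function of the samples at time $t$: $\ell(\pi(X^n[t])) = \ell(X^n[t])$ for every permutation $\pi$, since both the numerator and denominator are sums over all of $\ms$. Consequently the statistic $W[t]$ and the stopping time $\tau^*(b)$ depend on the observations only through the permutation-invariant quantity $\ell(X^n[i])$, $i\le t$, and in particular are measurable with respect to the $\sigma$-algebra generated by $\{\ell(X^n[i])\}$. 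The key consequence is that the distribution of $\ell(X^n[t])$ under $\mP_{\theta,\sigma_t}$ does not depend on $\sigma_t$: indeed, for any $\sigma,\sigma'\in\ms$ there is a permutation $\pi$ with $\sigma' = \sigma\circ\pi$, and permutation-invariance of $\ell$ gives $\mE_{\theta,\sigma'}[g(\ell(X^n))] = \mE_{\theta,\sigma}[g(\ell(X^n))]$ for every bounded $g$.

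Second, I would introduce the two ``averaged'' measures $\bar{\mP}_\theta = \frac{1}{|\ms|}\sum_{\sigma\in\ms}\mP_{\theta,\sigma}$, $\theta=0,1$, exactly as in \eqref{eq:h0bayesin}--\eqref{eq:h1bayesin}, and note that $\log\ell(X^n)$ is precisely the log-likelihood ratio of $\bar{\mP}_1$ against $\bar{\mP}_0$. Therefore the process $\{W[t]\}$ under the i.i.d.\ sequence with common marginal $\bar{\mP}_0$ (resp.\ $\bar{\mP}_1$) is exactly the CuSum statistic for the simple change-detection problem ``pre-change $\bar{\mP}_0$ versus post-change $\bar{\mP}_1$,'' and $\tau^*(b)$ is the corresponding CuSum stopping time. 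By Moustakides' theorem this stopping time is exactly optimal: among all stopping times with $\mE^\infty_{\bar{\mP}_0}[\tau]\ge\gamma$ it minimizes Lorden's worst-case delay $\sup_\nu \operatorname{ess\,sup}\mE^\nu[(\tau-\nu)^+\mid \mathcal F_{\nu-1}]$ for this simple problem, where $\gamma$ is the WARL level reached by $\tau^*(b)$.

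Third, I would carry out the transfer. Because $\tau^*(b)$ is $\{\ell(X^n[i])\}$-measurable and the law of each $\ell(X^n[i])$ under $\mP_{\theta,\sigma_i}$ is the same for every $\sigma_i$ (step one), all the performance functionals of $\tau^*(b)$ in the anonymous model coincide with those in the simple $(\bar{\mP}_0,\bar{\mP}_1)$ model: in particular $\text{WARL}(\tau^*(b)) = \mE^\infty_{\bar{\mP}_0}[\tau^*(b)]$ and $\text{WADD}(\tau^*(b)) = \sup_\nu\operatorname{ess\,sup}\mE^\nu_{\bar{\mP}_0\to\bar{\mP}_1}[(\tau^*(b)-\nu)^+\mid\mathcal F_{\nu-1}]$, the suprema over $\Omega$ being vacuous. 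For the converse direction I would take any stopping time $\tau$ that is feasible in the anonymous problem, i.e.\ $\inf_\Omega\mE^\infty_\Omega[\tau]\ge\gamma$. Choosing any fixed $\Omega$ of constant label gives a sequence that is i.i.d.\ $\bar{\mP}_0$ only if $\tau$ happens to be permutation-invariant; in general one must first symmetrize $\tau$, as in step 5 of the MLRT proof, replacing $\tau$ by a stopping time measurable with respect to the invariant $\sigma$-algebra without increasing the WADD or decreasing the WARL, and then observe that a permutation-invariant $\tau$ has identical law under $\mP_{\infty,\sigma_1,\sigma_2,\dots}$ for every label sequence, so its WARL equals $\mE^\infty_{\bar{\mP}_0}[\tau]\ge\gamma$ and its WADD dominates the simple-problem delay. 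Hence $\text{WADD}(\tau)\ge\text{WADD}(\tau^*(b))$ by optimality of CuSum, which is the claim.

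The main obstacle I anticipate is the symmetrization/measurability argument in the last step: unlike the fixed-sample-size hypothesis test, a stopping time is adapted to a growing filtration, and one must check that projecting onto the permutation-invariant filtration (equivalently, replacing the observations by the sufficient statistic $\{\ell(X^n[i])\}$, or randomizing over permutations) produces a bona fide stopping time whose WADD does not increase and whose WARL does not decrease under \emph{every} label trajectory $\Omega$ simultaneously — including time-varying $\Omega$, for which the pre-change and post-change sequences are not identically distributed before reduction. Handling the ess-sup over the pre-change history $\mathbf X^n[1,\nu-1]$ and the sup over $\nu$ and $\Omega$ jointly, and confirming that the worst case is attained precisely in the reduced i.i.d.\ model, is where the ``novel connection among several simple QCD problems'' the authors allude to will do the real work.
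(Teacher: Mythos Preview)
Your overall architecture matches the paper's: introduce the mixture measures $\bar{\mP}_\theta$, identify $\tau^*$ as the CuSum rule for the simple problem $(\bar{\mP}_0,\bar{\mP}_1)$, invoke Moustakides, and transfer. Your treatment of $\tau^*$ itself---that by permutation invariance its WADD and WARL coincide with those in the simple model---is exactly what the paper does in Appendix~B.

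Where you diverge is in the converse step, and the paper's route is both different and cleaner. You propose to symmetrize an arbitrary feasible $\tau$ and then argue as for $\tau^*$; you correctly flag this as the delicate point. The paper never symmetrizes. Instead it proves, for \emph{every} stopping time $\tau$ (symmetric or not), the pair of inequalities
\[
\widetilde{\text{WADD}}(\tau)\le\text{WADD}(\tau),\qquad \widetilde{\text{ARL}}(\tau)\ge\text{WARL}(\tau),
\]
where the tilde quantities are computed in the i.i.d.\ $(\bar{\mP}_0,\bar{\mP}_1)$ model. The ARL inequality is immediate: $\inf_\Omega\mE^\infty_\Omega[\tau]\le$ the uniform average over all label sequences, and by linearity that average \emph{is} $\widetilde{\mE}^\infty[\tau]$---no symmetry of $\tau$ required. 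The WADD inequality is handled by an intermediate model $\widehat{\cdot}$ with mixture pre-change but arbitrary post-change labels: first a purely measure-theoretic lemma shows $\sup_{\mu}\operatorname{ess\,sup}_{\mP_\mu} f = \operatorname{ess\,sup}_{\text{mixture}} f$ (so $\text{WADD}=\widehat{\text{WADD}}$), and then $\sup$ over post-change labels dominates the uniform average, giving $\widehat{\text{WADD}}\ge\widetilde{\text{WADD}}$. Once these hold for all $\tau$, feasibility $\text{WARL}(\tau)\ge\gamma$ forces $\widetilde{\text{ARL}}(\tau)\ge\gamma$, Moustakides gives $\widetilde{\text{WADD}}(\tau)\ge\widetilde{\text{WADD}}(\tau^*)$, and the chain closes with $\text{WADD}(\tau)\ge\widetilde{\text{WADD}}(\tau)\ge\widetilde{\text{WADD}}(\tau^*)=\text{WADD}(\tau^*)$.

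So the obstacle you anticipate---making symmetrization of a stopping time respect the filtration, the ess-sup conditioning, and all $\Omega$ simultaneously---is real for your route but simply does not arise in the paper's. The ``novel connection'' the authors allude to is precisely this chain of inequalities for arbitrary $\tau$, not a symmetrization device.
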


\begin{proof}[Proof Sketch]
Consider a simple QCD problem with samples independent and identically  distributed (i.i.d.) according to the pre-change distribution  $\widetilde{\mP}_0=\frac{1}{\mid \ms\mid}\sum_{\sigma\in \ms}\mP_{0,\sigma}$ and the post-change distribution  $\widetilde{\mP}_1=\frac{1}{\mid \ms\mid}\sum_{\sigma\in \ms}\mP_{1,\sigma}$, respectively. For this pair of pre- and post-change distributions, define the $\widetilde{\text{WADD}}$ and $\widetilde{\text{ARL}}$ for any stopping rule $\tau$ as follows:
\begin{flalign}
\widetilde{\text{WADD}}(\tau)&=\sup\limits_{\nu\geq1}\text{ess}\sup\widetilde{\mE}^\nu[(\tau-\nu)^+|\widetilde{\mathbf X}^n[1,\nu-1]],\nn\\
\widetilde{\text{ARL}}(\tau)&=\widetilde{\mE}^\infty[\tau],
\end{flalign}
where $\widetilde{\mE}^\nu$ denotes the expectation when the change is at $\nu$, the pre- and post-change distributions are $\widetilde{\mP}_0$ and $\widetilde{\mP}_1$, and $\widetilde{\mathbf X}^n[t], 1\leq t\leq \nu-1$, are i.i.d.\ from $\widetilde{\mP}_0$. For this new problem, the goal is to solve 
\begin{flalign}\label{eq:newproblem}
\inf_{\tau:\widetilde{\text{ARL}}(\tau)\geq \gamma} \widetilde{\text{WADD}}(\tau)
\end{flalign}
for some prescribed $\gamma>0$.
	
It was shown that the CuSum algorithm is exactly optimal for the problem in \eqref{eq:newproblem} under Lorden's criterion in \cite{MoustakidesAS86}. Therefore, $\tau^*$ in \eqref{eq:mcusum} is exactly optimal for the QCD problem defined by pre- and post-change distributions $\widetilde{\mP}_0$ and $\widetilde{\mP}_1$.
	
Following similar ideas as ones in Section \ref{sec:3anonbinHT}, we can show that for any stopping time $\tau$,
\begin{flalign}\label{EQ:AAA}
\widetilde{\text{WADD}}(\tau)\leq {\text{WADD}}(\tau) \text{ and }\widetilde{\text{ARL}}(\tau)\geq {\text{WARL}}(\tau).
\end{flalign}
	
We will then show that $\tau^*$ achieves the equality in \eqref{EQ:AAA}, which will complete the proof.
Due to the fact that $\tau^*$ is symmetric, i.e., it is invariant to any permutation of $X^n[j]$, $\forall j=1,2,\ldots$. For any $\Omega$ and $\Omega'$, it follows that 
\begin{flalign}\label{EQ:BBB}
\text{ess}\sup\mathbb{E}&^\nu_{\Omega}[(\tau^*-\nu)^+|\mathbf X^n[1,\nu-1]]\nn\\&=\text{ess}\sup\mathbb{E}^\nu_{\Omega'}[(\tau^*-\nu)^+|\mathbf X^n[1,\nu-1]],\nn\\
\mathbb{E}^\infty_{\Omega}[\tau^*]&=\mathbb{E}^\infty_{\Omega'}[\tau^*].
\end{flalign}
To establish \eqref{EQ:AAA} and the optimality of $\tau^*$, the proof is more involved than the binary hypothesis testing case in Sec. \ref{sec:3anonbinHT} due to the ess$\sup$  and the conditional expectation. 
\end{proof}
The missing details of the proof can be found in Appendix \ref{proff} and Appendix \ref{sec:appachiev}. The asymptotic optimality under Pollak's formulation \cite{pollak1985} can also be derived similarly, and is ignored in this paper however due to space limitation.

The mixture likelihood ratio $\ell(X^n[i])$ needs to compute the average of the likelihood over all possible $\sigma \in \ms$. Note that the size of $\ms$ is $\binom{n}{n_1,\cdots ,n_K}$. From the  exponential bounds
on the size of a type class\cite{cover2006information}, we have that 
$
\frac{2^{nH\big([\frac{n_1}{n}\cdots\frac{n_K}{n}]\big)}}{(n+1)^{\mid \mathcal{X}\mid}} \leq \binom{n}{n_1,\cdots, n_K} \leq 2^{nH\big([\frac{n_1}{n}\cdots\frac{n_K}{n}]\big)},\nn
$
where $H\big([\frac{n_1}{n}\cdots\frac{n_K}{n}]\big)$ denotes the entropy of $[\frac{n_1}{n}\cdots\frac{n_K}{n}]$. As $n\rightarrow \infty$, we have that $\lim_{n\rightarrow\infty}H\big([\frac{n_1}{n}\cdots\frac{n_K}{n}]\big) = H(\bm\alpha)$. Therefore, the computational complexity of mixture CuSum increases almost exponentially with $n$, which limits its practical applications in large networks. This motivates the need for computationally efficient tests  for large networks. There are a wide range of applications in which the number of nodes is very large, e.g., IoT networks with thousands of sensors, smart grids with a large number of PMUs, crowdsourcing, and wireless sensor networks.

\subsection{A Computationally Efficient Algorithm}
In this section, we focus on discrete distributions, that is, the cardinality of $\mathcal{X}$ is finite, where $\mathcal{X}$ denotes the alphabet of the distributions $p_{\theta,k}, \forall \theta \in \{0,1\}, k \in \{1,2,\cdots, K\}$. We note that our mixture CuSum algorithm and its exact optimality result apply  to general distributions, which are not necessarily discrete.
Denote by $\mathcal{P}_{\mathcal{X}}$ the set of all distributions supported on $\mathcal{X}$. We propose a computationally efficient algorithm and then derive a lower bound on its WARL so that a threshold can be chosen analytically for false alarm control. 

We first introduce some useful results that motivate the design of our algorithm.
Let $\Pi_{X^n}$ denote the empirical distribution of samples $X^n$, and let $T(\Pi_{X^n})$ denote the type class of $\Pi_{X^n}$.
Then,  it can be shown that \footnote{See Lemma 4.1 in \cite{anonymous} for the proof.}
\begin{flalign}\label{EQ:TYPE}
	\frac{\sum_{\sigma \in \ms}\mP_{1,\sigma}(X^n)}{\sum_{\sigma \in \ms}\mP_{0,\sigma}(X^n)} = \frac{\mP_{1,\sigma}\big(T(\Pi_{X^n})\big)}{\mP_{0,\sigma}\big(T(\Pi_{X^n})\big)}.
\end{flalign} 

The right hand side of equation in \eqref{EQ:TYPE} is a function of the empirical distribution $\Pi_{X^n}$. Let $\mathcal{P}_n$ denote the set of types with denominator $n$. For $n\geq 1$, let $Q_n\in\mathcal{P}_n$ be a sequence of distributions and $\lim_{n\rightarrow\infty}Q_n = Q$. The computation of the mixture likelihood ratio in \eqref{EQ:TYPE} can be approximated by an optimization problem when $n$ is large using the fact \footnote{See Lemma 5.2 in \cite{anonymous} for the proof.} that
\begin{flalign}\label{eq:convert}
&\lim_{n\rightarrow\infty}\frac{1}{n}\log\mP_{\theta,\sigma}\big(T(Q_n)\big) \nn\\&=-\inf_{\substack{\bm{U}=(U_1,...,U_K)\in (\mathcal{P}_{\mathcal{X}})^K\\\bm\alpha^T\bm{U} = Q}}\sum_{k=1}^K \alpha_k D(U_k||p_{\theta,k}).
\end{flalign}

The right hand side of \eqref{eq:convert} is a convex optimization problem with linear constraints, which can be solved efficiently using standard optimization tools \cite{cvx, cvxgb08}. Its computational complexity is independent of the number of sensors. Therefore, for large $n$, the mixture over $\sigma$ in \eqref{EQ:TYPE} can be approximated by solving a convex optimization problem whose computational complexity is independent of the network size $n$. 

Let $\bm{P} = [P_1\cdots P_K]^T$, where $P_k \in\mathcal{P}_\mathcal{X}$, for $1\leq k\leq K$.  For any $Q \in \mathcal{P}_\mathcal{X}$, define the following function of $Q$:
\begin{flalign}
	f_{\bm{P}}(\bm{\alpha},Q) = \inf\limits_{\substack{(U_1,...,U_K)\in(\mathcal{P}_{\mathcal{X}})^K\\ \bm\alpha^T\bm{U} = Q}}\sum_{k=1}^K \alpha_k D(U_k||P_k).
\end{flalign}

Intuitively, an algorithm for the problem in Section \ref{sec:2problemmodel} can be constructed by approximating the log of the mixture likelihood ratio at time $t$ in the mixture CuSum algorithm using $n\big(f_{\bm{P}_0}(\bm{\alpha},\Pi_{\mathbf X^n[t]})-f_{\bm{P}_1}(\bm{\alpha},\Pi_{\mathbf X^n[t]})\big)$.
However, the lower bound on the WARL for this algorithm is difficult to derive due to the ``inf" in the test statistic. We construct a novel test that can be updated recursively, and for which a lower bound on WARL can be theoretically derived. Moreover, as will be numerically demonstrated, this test has a WADD-WARL trade-off that is close to the optimal mixture CuSum, while also being computationally efficient.

Let $\hat{\nu}_t$ denote the change point estimate at time $t$. Denote by $\hat{t}\triangleq t-\hat{v}_t+1$. We then design our detection statistic to approximate $W[t]$ in \eqref{eq:mcusum}:
\begin{flalign}
	\widehat{W}[t] = \hat{t}n\big[ f_{\bm{P}_0}(\bm{\alpha},\Pi_{\mathbf X^n[\hat{\nu}_t,t]})-f_{\bm{P}_1}(\bm{\alpha},\Pi_{\mathbf X^n[\hat{\nu}_t,t]})\big].
\end{flalign}
Instead of using a maximum likelihood approach to estimate $\hat{\nu}_t$ as in \eqref{eq:mcusum}, which is not computationally efficient here, since $\hat{\nu}_t$ also appears in $\mathbf X^n[\hat{\nu}_t,t]$, we design a recursive way of updating $\hat{\nu}_t$. Let $\hat{\nu}_0 = 0$. If $\widehat{W}[t] \leq 0$, $\hat{\nu}_{t+1} = t+1$, and if $\widehat{W}[t] >0$, $\hat{\nu}_{t+1} = \hat{\nu}_t$. Then, $\Pi_{\mathbf X^n[\hat{\nu}_t,t]}$ can also be updated recursively: if $\widehat{W}[t] \leq 0$, $\Pi_{\mathbf X^n[\hat{\nu}_{t+1}, t+1]} = \Pi_{\mathbf X^n[t+1]}$, and if $\widehat{W}[t]>0$, $\Pi_{\mathbf X^n[\hat{\nu}_{t+1},t+1]} = \frac{\hat{t}\Pi_{\mathbf X^n[\hat{\nu}_t,t]} + \Pi_{\mathbf X^n[t+1]}}{\widehat{t}+1}$.

We next provide a heuristic explanation of how $\widehat{W}[t]$ evolves in the pre- and post-change regimes. According to the Glivenko–Cantelli theorem \cite{tucker1959}, before the change point $\nu$, as $n\rightarrow \infty$, $\Pi_{\mathbf X^n[\hat{\nu}_t, t]}$ converges to $\bm\alpha^T\bm P_0$ almost surely. It can be easily seen that $f_{\bm{P}}(\bm{\alpha},Q) \geq 0$ for any $\bm\alpha, \bm P$ and $Q$. The equality holds if and only if $\bm\alpha^T\bm P =  Q$. This implies that  $f_{\bm{P}_0}(\bm{\alpha},\bm\alpha^T\bm P_0)-f_{\bm{P}_1}(\bm{\alpha},\bm\alpha^T\bm P_0) < 0$. Therefore, before the change point $\nu$, for large $n$, $\widehat W[t]$ has a negative drift. Similarly, after the change point $\nu$, for large $n$, $\widehat W[t]$ has a positive drift of $f_{\bm{P}_0}(\bm{\alpha},\bm\alpha^T\bm P_1)$, and evolves towards $\infty$. This motivates us to construct the following computationally efficient test:
\begin{flalign}\label{eq:alter}
&\tau_e = \inf \Big\{t \geq 1: \widehat{W}[t] \geq b\Big\}.
\end{flalign}

The computation cost of $\tau_e$ mainly lies in the update of the empirical distribution and the optimization step. The computational complexity of updating the empirical distribution increases linearly with $n$, and the computational complexity of the optimization step is independent of $n$. Therefore, the computationally efficient test is more efficient than the optimal mixture CuSum algorithm when $n$ is large. Table \ref{table:com} summarizes the computational complexity of the mixture CuSum algorithm and the computationally efficient algorithm.

\begin{table}[!htbp]
\centering
\begin{tabular}{|c|c|c|}
\hline
  & Mixture CuSum & Efficient algorithm\\
\hline 
Complexity & $O(2^{nH(\bm\alpha)})$ & $O(n)$\\
\hline
\end{tabular}
\vspace{0.1cm}\caption{Computational complexity: mixture CuSum v.s. computationally efficient algorithm.}\label{table:com}
\end{table}

In the following theorem, we present a lower bound on the WARL for our computationally efficient test in \eqref{eq:alter}.
\begin{theorem}\label{THEOREM:1}
Define $\Gamma\triangleq \big\{\mu\in\mathcal{P}_{\mathcal{X}}: f_{\bm{P}_0}(\bm{\alpha},\mu) > f_{\bm{P}_1}(\bm{\alpha},\mu)\big\}$. Let
\begin{flalign}
h = \inf_{\substack{(U_1,...,U_K)\in(\mathcal{P}_{\mathcal{X}})^K\\ \bm\alpha^T\bm{U} \in \Gamma }}\sum_{k=1}^K n_k D(U_k||P_{0,k}).
\end{flalign}
Then $h>0$ and for any $\Omega$,
\begin{flalign}
\mE_\Omega^\infty \big[\tau_e(b)\big]\geq  \frac{e^{b}}{\big(\frac{b}{h}+1\big)\big(\prod_k|\mathcal{P}_{\frac{b}{h}n_k}|\big)}.
\end{flalign}
\end{theorem}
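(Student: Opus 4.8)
The plan is to lower bound the worst-case average run length to false alarm by a standard change-of-measure / union-bound argument that converts the event $\{\tau_e(b) \le m\}$ into a statement about the empirical distribution accumulated over some window lying in the ``bad'' set $\Gamma$, and then to control that probability by a Sanov-type bound whose exponent is exactly $h$. First I would observe that under $\mP_\Omega^\infty$ the data never changes, so the pre-change drift analysis applies on every window: for the event $\{\widehat W[t] \ge b\}$ to occur, the running estimate $\Pi_{\mathbf X^n[\hat\nu_t,t]}$ (an empirical distribution of $\hat t\,n$ i.i.d.\ symbols, each drawn from $p_{0,\sigma_i(j)}$ according to the true labels) must satisfy $\hat t\,n\,[f_{\bm P_0}(\bm\alpha,\Pi)-f_{\bm P_1}(\bm\alpha,\Pi)] \ge b$, which in particular forces $\Pi \in \Gamma$ and forces the window length $\hat t$ to be at least $b/h$ (since on $\Gamma$ the per-sample drift $f_{\bm P_0}(\bm\alpha,\Pi)-f_{\bm P_1}(\bm\alpha,\Pi)$ is bounded, and the total increment is at most $\hat t\,n\cdot(\text{something controlled by }h)$; more precisely $\widehat W[t]\le \hat t\,n\, f_{\bm P_0}(\bm\alpha,\Pi)$ and on $\Gamma$ the quantity $\hat t\,n\,f_{\bm P_0}(\bm\alpha,\Pi)\ge h$ already whenever $\Pi\in\Gamma$, so $b\le \widehat W[t]$ combined with a length argument gives $\hat t \ge b/h$). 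I would also first check $h>0$: $\Gamma$ is contained in the closed set of $\mu$ with $\bm\alpha^T\bm P_0 \ne \mu$ (since $f_{\bm P_0}(\bm\alpha,\bm\alpha^T\bm P_0)=0<f_{\bm P_1}(\bm\alpha,\bm\alpha^T\bm P_0)$ shows equality cannot be in $\Gamma$—wait, rather $\bm\alpha^T\bm P_0\notin\Gamma$), and the objective in the definition of $h$ vanishes only when $U_k=P_{0,k}$ for all $k$, i.e.\ when $\bm\alpha^T\bm U=\bm\alpha^T\bm P_0\notin\Gamma$; since $\Gamma$ is open and bounded away from $\bm\alpha^T\bm P_0$ in the relevant sense, continuity and compactness give $h>0$.

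Next I would run the classical argument bounding $\mE_\Omega^\infty[\tau_e(b)]$ from below. Fix the integer $m_0 = \lceil b/h\rceil$. Since a false alarm at any time $t$ requires a window of length $\hat t \ge m_0$ whose accumulated type lies in $\Gamma$ with the weighted divergence threshold exceeded, I would partition time into consecutive blocks of length $m_0$ and note that for $\tau_e(b)$ to be at most $k m_0$, at least one of the first $k$ blocks (together with possibly earlier data, but a union bound over the $\le k$ blocks suffices after using independence across time and the recursive reset structure) must produce a type in $\Gamma$. For a single block of $m_0 n$ i.i.d.\ samples, the probability that the empirical distribution lies in $\Gamma$ is, by the method of types, at most $(\text{number of types})\cdot \exp(-m_0 n \cdot \min_{\mu\in\Gamma}(\text{rate}))$; the rate here, after accounting for the heterogeneous labels, is exactly $\frac{1}{n}\inf\{\sum_k n_k D(U_k\|P_{0,k}): \bm\alpha^T\bm U\in\Gamma\} = h/n$, by the same convex-duality / type-class identity already established in \eqref{EQ:TYPE}–\eqref{eq:convert}. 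The number of relevant types over a window of length $\hat t = b/h$ samples per sensor, grouped by sensor class, is at most $\prod_k |\mathcal P_{(b/h)n_k}|$. This gives $\mP_\Omega^\infty(\text{one block triggers}) \le \big(\prod_k|\mathcal P_{(b/h)n_k}|\big)e^{-b}$, and then a geometric-trials / Wald-style argument (or Markov's inequality on the number of blocks before the first trigger) yields $\mE_\Omega^\infty[\tau_e(b)] \ge \frac{m_0}{\text{per-block trigger prob}} \ge \frac{e^b}{(b/h+1)\prod_k|\mathcal P_{(b/h)n_k}|}$, which is the claimed bound.

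The main obstacle I anticipate is \emph{handling the data-dependent window $\hat\nu_t$ cleanly}. Because $\hat\nu_t$ is itself a function of the past observations, the segment $\mathbf X^n[\hat\nu_t,t]$ is not a fixed-length i.i.d.\ block, so one cannot directly apply Sanov's theorem to it. The standard remedy—and the one I would use—is to bound the false-alarm event from above by a union over all \emph{deterministic} start times $j$: $\{\tau_e(b)=t\}\subseteq \bigcup_{j\le t-m_0+1}\{\hat t n[f_{\bm P_0}-f_{\bm P_1}](\Pi_{\mathbf X^n[j,t]})\ge b\}$, and then group these by block structure so the count of terms stays $O(b/h)$ rather than $O(t)$; getting the polynomial prefactor to come out as exactly $(b/h+1)\prod_k|\mathcal P_{(b/h)n_k}|$ rather than something larger requires being careful that the ``$+1$'' absorbs the start-time union and that the type count is taken at the minimal window length $b/h$ (longer windows only help, since the exponent grows). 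A secondary technical point is verifying the per-sample drift-to-threshold conversion $\hat t\ge b/h$ rigorously on $\Gamma$, i.e.\ that $\widehat W[t]\ge b$ with $\Pi\in\Gamma$ forces enough samples; this follows from $\widehat W[t]\le \hat t n f_{\bm P_0}(\bm\alpha,\Pi)$ together with the fact that $n f_{\bm P_0}(\bm\alpha,\Pi)\le h$ is \emph{not} what we want—rather we want a per-sample upper bound on the drift, which holds because $f_{\bm P_0}(\bm\alpha,\cdot)$ is bounded on $\Gamma$ by, essentially, $h/n$ times... — I would reconcile this by noting the cleaner route: on the event $\{\widehat W[t]\ge b,\ \Pi\in\Gamma\}$ the weighted-divergence lower bound $\sum_k n_k D(\cdot\|P_{0,k})\ge h$ for the accumulated type is what the method-of-types bound actually needs, and the factor $\hat t$ only enters through the type count, which is maximized (smallest exponent) at $\hat t=b/h$; so the conversion I really need is just that $\hat t\ge b/h$ on this event, which I get from $b\le\widehat W[t]\le \hat t\, n\, f_{\bm P_0}(\bm\alpha,\Pi)$ and the inequality $n f_{\bm P_0}(\bm\alpha,\Pi)\le$ (a constant times $h$)—or, even more simply, by restricting the union bound to windows of every length $\ell\ge b/h$ and summing the geometric-in-$\ell$ tail of the type bounds, which converges and is dominated by the $\ell=b/h$ term.
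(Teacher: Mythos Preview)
Your proposal has the right raw ingredients (Sanov-type bounds, the role of $h$, a geometric/renewal heuristic) but it misses the structural observation that actually makes the argument close, and two of the intermediate claims you lean on are false or unjustified. The paper's proof is organized around the \emph{regeneration structure} that is built into $\tau_e$: whenever $\widehat W[t]\le 0$, the change-point estimate resets to $\hat\nu_{t+1}=t+1$, so if $Y_0=0$ and $Y_{r+1}=\inf\{t>Y_r:\widehat W[t]\le 0\}$, the statistics in successive cycles are independent under $\mP_\Omega^\infty$. This is exactly what neutralizes the data-dependent window $\hat\nu_t$ you flag as the main obstacle. One then bounds the probability that the threshold is crossed within a single cycle via $\mP_\Omega^\infty(\tau_e<Y)\le \mP_\Omega^\infty(\tau_e<m)+\mP_\Omega^\infty(Y>m)$, applies Sanov to each piece (the first with exponent $\ge b$, the second with exponent $\ge mh$), sets $m=b/h$ to equalize them, and finally uses cycle independence to convert this into $\mE_\Omega^\infty[\tau_e]\ge \mE_\Omega^\infty[R]\ge 1/\mP_\Omega^\infty(\tau_e<Y)$. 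Your deterministic block partition does not substitute for this: the event $\{\widehat W[t]\ge b\}$ in block $k$ can be driven by a window beginning in block $1$, so blocks are not independent and the union over start times you propose yields $O(t)$ terms, not the $O(b/h)$ you need.

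Two specific claims also fail. First, ``$\widehat W[t]\ge b$ forces $\hat t\ge b/h$'' is false in the direction you need: from $b\le\widehat W[t]\le \hat t\,n f_{\bm P_0}(\bm\alpha,\Pi)$ and the fact that $n f_{\bm P_0}(\bm\alpha,\Pi)\ge h$ on $\Gamma$ (this is a lower bound, not an upper bound) you only get $\hat t\ge b/(n f_{\bm P_0}(\bm\alpha,\Pi))$, which can be arbitrarily smaller than $b/h$. The paper never bounds $\hat t$ from below. Instead it notes that on $\{\widehat W[t]\ge b\}$ one has $\hat t\,n f_{\bm P_0}(\bm\alpha,\Pi)\ge b$, and since the Sanov exponent for the event $\{\Pi_{\mathbf X^n[\hat\nu_t,t]}\in\Gamma_{b,t}\}$ is precisely $\inf\{\hat t\sum_k n_k D(U_k\|p_{0,k}):\bm\alpha^T\bm U\in\Gamma_{b,t}\}=\hat t\,n f_{\bm P_0}(\bm\alpha,\cdot)\ge b$, the factor $e^{-b}$ drops out directly. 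Second, the choice $m=b/h$ is not a minimal window length; it is the truncation level at which the two error terms $\mP_\Omega^\infty(\tau_e<m)$ and $\mP_\Omega^\infty(Y>m)$ balance, and it is this balancing (not any per-step drift bound) that produces the $(b/h+1)$ prefactor.
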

In the following, we provide a proof sketch, and the full proof can be found in Appendix \ref{sec: appc}.
\begin{proof}[Proof Sketch]
Let $Y = \text{inf}\{t \geq 1 : \widehat{W}[t] \leq 0\}$ be the first regeneration time.
For any $\Omega$ and $m\geq 1$, from Sanov's theorem \cite{cover2006information}, we can show that 
\begin{flalign}\label{eq:sketcheffarl}
\mP^\infty_\Omega(Y>m) \leq \bigg(\prod_k|\mathcal{P}_{mn_k}|\bigg) e^{-mh}.
\end{flalign}
	
Define regeneration times $Y_0 = 0$ and for $r\geq 0$, $Y_{r+1} = \inf\big\{t>Y_r : \widehat{W}[t] \leq 0\big\}$. 
Let 
$
R = \inf \{r: Y_r \leq \infty\ \text{and}\ \widehat{W}[t]\geq b\ \text{for some}\ Y_r < t\leq Y_{r+1}\}
$
denote the index of the first cycle in which $\widehat{W}[t]$ crosses $b$. Note that according to the recursive update rule of $\hat{\nu}_t$ and $\widehat{W}[t]$, the test statistics in cycle $r+1$ are independent of the samples in cycles $1,\cdots,r$.
For any $\Omega$, we have that
\begin{flalign}\label{eq:sketchex}
\mE_\Omega^\infty [\tau_e(b)] \geq \mE_\Omega^\infty [R] = \sum_{r=0}^\infty\mP_\Omega^\infty(R\geq r).
\end{flalign}

For any $\Omega$ and $m\geq 1$, we have that 
\begin{flalign}\label{eq:sketchstop}
&\mP_\Omega^\infty(\tau_e(b) < Y) \leq \mP_\Omega^\infty(\tau_e(b) < m) + \mP_\Omega^\infty(Y>m).
\end{flalign}
Consider the first term in \eqref{eq:sketchstop} $\mP_\Omega^\infty(\tau_e(b) < m)$, by applying Sanov's theorem \cite{cover2006information}, we have that for any $\Omega$, 
\begin{flalign}\label{eq:sketchfirst}
\mP_\Omega^\infty\big(\tau_e(b) < m\big) \leq m \left(\prod_k|\mathcal{P}_{mn_k}|\right)e^{-b}.
\end{flalign}
Let $m = \frac{b}{h}$. Combing \eqref{eq:sketcheffarl} and \eqref{eq:sketchfirst}, it follows that  
\begin{flalign}\label{eq:sketchrege}
&\mP_\Omega^\infty(\tau_e(b) < Y)
\leq \Big(\frac{b}{h} + 1\Big) \bigg(\prod_k|\mathcal{P}_{\frac{b}{h}n_k}|\bigg)e^{-b}. 
\end{flalign}

From \eqref{eq:sketchrege} and the independence among the cycles\cite{asmussen2008applied}, we have that 
\begin{flalign}\label{eq:sketchpro}
&\mP_\Omega^\infty(R\geq r) \geq \Bigg(1-\Big(\frac{b}{h}+1\Big)\bigg(\prod_k|\mathcal{P}_{\frac{b}{h}n_k}|\bigg)e^{-b}\Bigg)^r,
\end{flalign}
Therefore, from \eqref{eq:sketchex} and \eqref{eq:sketchpro}, for any $\Omega$,
\begin{flalign}
&\mE_\Omega^\infty [\tau_e(b)]\geq  \frac{e^{b}}{\Big(\frac{b}{h}+1\Big)\Big(\prod_k|\mathcal{P}_{\frac{b}{h}n_k}|\Big)}.
\end{flalign}
This completes the proof.
\end{proof}

To guarantee that $\inf_{\Omega} \mE_\Omega^\infty  \big[\tau_e(b)\big]\geq \gamma$, it suffices to choose $b$ such that $\frac{e^{b}}{\big(\frac{b}{h}+1\big)\big(\prod_k|\mathcal{P}_{\frac{b}{h}n_k}|\big)} = \gamma$ and $b\sim\log\gamma$. 

Note that an upper bound on the WADD for $\tau_e$ is difficult to obtain. To understand the detection delay of the proposed computationally efficient test, we then study the case when the change occurs at $\nu=1$. We have the following result.
\begin{proposition}\label{proposition:1}
Consider the case with $\nu=1$.  Then, as $t\rightarrow\infty$,
	\begin{flalign}
	&n\big[ f_{\bm{P}_0}(\bm{\alpha},\Pi_{\mathbf X^n[1,t]})-f_{\bm{P}_1}(\bm{\alpha},\Pi_{\mathbf X^n[1,t]})\big]\rightarrow nf_{\bm{P}_0}(\bm{\alpha},\bm{\alpha}^T\bm{P}_1),\nn
	\end{flalign} 
	almost surely.
\end{proposition}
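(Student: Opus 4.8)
The plan is to show that the argument of the two $f$-functionals, namely the empirical distribution $\Pi_{\mathbf X^n[1,t]}$, converges almost surely as $t\to\infty$, and that $f_{\bm P}(\bm\alpha,\cdot)$ is continuous, so that the limit passes inside. Since $\nu=1$, every sample block $X^n[i]$, $i=1,\ldots,t$, is generated under the post-change regime, i.e.\ $X^n[i]\sim\mP_{1,\sigma_i}$ for some (possibly time-varying) $\sigma_i\in\ms$. The key observation is that the \emph{pooled} empirical distribution over all $nt$ scalar observations is exactly $\Pi_{\mathbf X^n[1,t]}$, and its expectation, for any block labeled by $\sigma_i$, is $\frac{1}{n}\sum_{j=1}^n p_{1,\sigma_i(j)} = \frac{1}{n}\sum_{k=1}^K n_k\,p_{1,k}$, which is \emph{independent of} $\sigma_i$ and equals $\tfrac{1}{n}(\bm n)^T\bm P_1$; as $n$ is fixed here this is a fixed vector that I will identify with $\bm\alpha^T\bm P_1$ in the notation of the paper (strictly $[\tfrac{n_1}{n}\cdots\tfrac{n_K}{n}]^T\bm P_1$). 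Thus the mean of every term in the average defining $\Pi_{\mathbf X^n[1,t]}$ is the same fixed probability vector $\bm\alpha^T\bm P_1$, regardless of the unknown label sequence $\Omega$.

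First I would write $\Pi_{\mathbf X^n[1,t]}(x) = \frac{1}{nt}\sum_{i=1}^t\sum_{j=1}^n \indic\{X^n_j[i]=x\}$ for each $x\in\mathcal X$. For fixed $x$, the summands $Z_{i,j}\triangleq\indic\{X^n_j[i]=x\}$ are independent across $(i,j)$ (by the independence assumptions in Section~\ref{sec:modelqcd}), uniformly bounded, with $\mE[Z_{i,j}] = p_{1,\sigma_i(j)}(x)$, and the block-average $\frac{1}{n}\sum_j \mE[Z_{i,j}] = (\bm\alpha^T\bm P_1)(x)$ for every $i$. I would then invoke a strong law of large numbers for independent, non-identically distributed, uniformly bounded random variables (e.g.\ Kolmogorov's SLLN via the convergence of $\sum \mathrm{Var}(Z_{i,j})/(\text{index})^2$, or simply the SLLN for the i.i.d.\ block averages $\bar Z_i\triangleq\frac1n\sum_j Z_{i,j}$, which are i.i.d.\ with mean $(\bm\alpha^T\bm P_1)(x)$ only if the $\sigma_i$ were fixed — so Kolmogorov's criterion is the cleaner route) to conclude $\Pi_{\mathbf X^n[1,t]}(x)\to(\bm\alpha^T\bm P_1)(x)$ a.s. Taking a union over the finitely many $x\in\mathcal X$ gives $\Pi_{\mathbf X^n[1,t]}\to\bm\alpha^T\bm P_1$ a.s. in total variation.

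Next I would argue that $Q\mapsto f_{\bm P}(\bm\alpha,Q)$ is continuous on $\mathcal P_{\mathcal X}$ for each fixed $\bm P$ with full support (or, more carefully, continuous at $Q=\bm\alpha^T\bm P_1$). This is a standard parametric-optimization / Berge-type argument: the feasible set $\{\bm U\in(\mathcal P_{\mathcal X})^K:\bm\alpha^T\bm U=Q\}$ is a compact polytope varying continuously (in the Hausdorff sense) with $Q$, and $\bm U\mapsto\sum_k\alpha_k D(U_k\|P_k)$ is continuous (finite and continuous on the simplex when $P_k$ has full support; lower semicontinuous in general), so the value function $f_{\bm P}(\bm\alpha,Q)$ is continuous in $Q$. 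Combining continuity with the a.s.\ convergence of the argument, and using that both $f_{\bm P_0}$ and $f_{\bm P_1}$ are evaluated at the \emph{same} $\Pi_{\mathbf X^n[1,t]}$, yields
\begin{flalign}
n\big[f_{\bm P_0}(\bm\alpha,\Pi_{\mathbf X^n[1,t]})-f_{\bm P_1}(\bm\alpha,\Pi_{\mathbf X^n[1,t]})\big]\longrightarrow n\big[f_{\bm P_0}(\bm\alpha,\bm\alpha^T\bm P_1)-f_{\bm P_1}(\bm\alpha,\bm\alpha^T\bm P_1)\big] \nonumber
\end{flalign}
a.s., and since $f_{\bm P_1}(\bm\alpha,\bm\alpha^T\bm P_1)=0$ (the infimum is attained at $U_k=p_{1,k}$, giving zero KL, as noted just before \eqref{eq:alter}), the right-hand side is exactly $nf_{\bm P_0}(\bm\alpha,\bm\alpha^T\bm P_1)$, which is the claim.

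\textbf{Main obstacle.} The analytically delicate point is the strong law under the time-varying, adversarially chosen labels $\Omega=\{\sigma_1,\sigma_2,\ldots\}$: the scalar observations are independent but \emph{not} identically distributed, so I cannot quote the plain i.i.d.\ SLLN and must instead verify Kolmogorov's variance-sum condition (trivial here since variances are bounded by $1/4$, so $\sum_i \mathrm{Var}(\bar Z_i)/i^2<\infty$) and make sure the conclusion is uniform over all label sequences, which it is because the per-block mean is label-independent. A secondary, more technical nuisance is the continuity of $f_{\bm P}(\bm\alpha,\cdot)$ when some $P_{\theta,k}$ fails to have full support, where $D(U_k\|P_k)$ can jump to $+\infty$; this can be handled by restricting attention to a neighborhood of $\bm\alpha^T\bm P_1$ on which the relevant optimal $U_k$ stay in the interior, or simply by assuming (as is implicit in the paper's use of KL divergences throughout) that the $p_{\theta,k}$ are mutually absolutely continuous.
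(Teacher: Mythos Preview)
Your proposal is correct and follows essentially the same route as the paper: almost-sure convergence of $\Pi_{\mathbf X^n[1,t]}$ to $\bm\alpha^T\bm P_1$, continuity of $Q\mapsto f_{\bm P}(\bm\alpha,Q)$, and the identity $f_{\bm P_1}(\bm\alpha,\bm\alpha^T\bm P_1)=0$. The paper's argument is considerably terser---it simply cites the Glivenko--Cantelli theorem for the convergence step (without addressing the time-varying labels $\sigma_t$) and leaves the continuity of $f_{\bm P}(\bm\alpha,\cdot)$ implicit (it is invoked elsewhere, in Appendix~\ref{sec:KL}, via Lemma~5.3 of \cite{anonymous})---so your explicit Kolmogorov-SLLN verification exploiting the label-independence of the per-block mean, together with the Berge-type continuity argument, are if anything more careful than the paper's own proof.
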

\begin{proof}
	According to the Glivenko–Cantelli theorem \cite{tucker1959}, as $t\rightarrow \infty$, under the post-change distribution, the empirical distribution $\Pi_{\mathbf X^n[1,t]}$ convergences to $\bm{\alpha}^T\bm{P}_1$ almost surely. Due to the fact that $f_{\bm{P}_1}(\bm{\alpha},\bm{\alpha}^T\bm{P}_1) = 0$, we have that $\frac{1}{t}(t-1+1)n\big[ f_{\bm{P}_0}(\bm{\alpha},\Pi_{\mathbf X^n[1,t]})-f_{\bm{P}_1}(\bm{\alpha},\Pi_{\mathbf X^n[1,t]})\big]$ converges to $nf_{\bm{P}_0}(\bm{\alpha},\bm{\alpha}^T\bm{P}_1)$ almost surely.
\end{proof}
Intuitively, Proposition \ref{proposition:1} implies that if the change is at $\nu=1$ and regeneration does not happen, then the detection delay of the computationally efficient algorithm increases linearly with the threshold $b$ at the rate of ${1}/(nf_{\bm{P}_0}(\bm{\alpha},\bm{\alpha}^T\bm{P}_1))$.

We then present the following universal lower bound on the WADD, and show that the slope is also ${1}/(nf_{\bm{P}_0}(\bm{\alpha},\bm{\alpha}^T\bm{P}_1))$ when $n$ is large.
\begin{proposition}\label{proposition:2}
	For large $\gamma$, we have that 
	\begin{flalign}
	\inf_{\tau: \text{WARL}\geq \gamma} \text{WADD}(\tau) \sim \frac{\log \gamma}{D\Big(\widetilde{\mP}_1\big|\big|\widetilde{\mP}_0\Big)}(1+o(1)). 	
	\end{flalign}
	Moreover, as $n\rightarrow\infty$,
	\begin{flalign}\label{EQ:KLCONVERGE}
	\lim_{n\rightarrow\infty} \frac{1}{n}D\Big(\widetilde{\mP}_1\big|\big|\widetilde{\mP}_0\Big)= f_{\bm{P}_0}(\bm{\alpha},\bm{\alpha}^T\bm{P}_1).
	\end{flalign}
\end{proposition}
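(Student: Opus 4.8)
The plan is to prove the two displayed statements separately: the first by combining the exact reduction of the anonymous problem to the simple i.i.d.\ problem (already set up in the proof of Theorem~\ref{theorem:1}) with the classical first-order asymptotics of optimal change detection, and the second by a direct computation of the Stein exponent of the mixture pair $(\widetilde{\mP}_1,\widetilde{\mP}_0)$ via the type-class structure. For the first claim, I would first argue that the optimal trade-offs of the two problems coincide exactly. By \eqref{EQ:AAA}, every stopping rule satisfies $\widetilde{\text{WADD}}(\tau)\le\text{WADD}(\tau)$ and $\text{WARL}(\tau)\le\widetilde{\text{ARL}}(\tau)$, so $\{\tau:\text{WARL}(\tau)\ge\gamma\}\subseteq\{\tau:\widetilde{\text{ARL}}(\tau)\ge\gamma\}$ and therefore $\inf_{\tau:\text{WARL}(\tau)\ge\gamma}\text{WADD}(\tau)\ge\inf_{\tau:\widetilde{\text{ARL}}(\tau)\ge\gamma}\widetilde{\text{WADD}}(\tau)$. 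For the reverse inequality I would pick the threshold $b_\gamma$ so that the mixture CuSum $\tau^*(b_\gamma)$ has $\widetilde{\text{ARL}}(\tau^*(b_\gamma))=\gamma$; by the symmetry identities \eqref{EQ:BBB} it achieves equality in \eqref{EQ:AAA}, hence it is feasible for the anonymous problem with $\text{WADD}(\tau^*(b_\gamma))=\widetilde{\text{WADD}}(\tau^*(b_\gamma))$, and the latter equals $\inf_{\tau:\widetilde{\text{ARL}}(\tau)\ge\gamma}\widetilde{\text{WADD}}(\tau)$ because $\tau^*$ is exactly optimal for the simple i.i.d.\ problem \cite{MoustakidesAS86,lorden1971procedures}. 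Thus $\inf_{\tau:\text{WARL}(\tau)\ge\gamma}\text{WADD}(\tau)=\inf_{\tau:\widetilde{\text{ARL}}(\tau)\ge\gamma}\widetilde{\text{WADD}}(\tau)$, and I would then invoke the standard first-order asymptotics for a simple i.i.d.\ change-detection problem---Lorden's CuSum upper bound together with the universal information lower bound \cite{lorden1971procedures,lai1998information}---to get the right-hand side $\sim \log\gamma/D(\widetilde{\mP}_1\|\widetilde{\mP}_0)$ as $\gamma\to\infty$, provided $0<D(\widetilde{\mP}_1\|\widetilde{\mP}_0)<\infty$; finiteness follows from joint convexity of the KL divergence, which yields $D(\widetilde{\mP}_1\|\widetilde{\mP}_0)\le\sum_k n_kD(p_{1,k}\|p_{0,k})$ (finite under per-group absolute continuity), and positivity from the standing assumption $\bm\alpha^T\bm P_0\ne\bm\alpha^T\bm P_1$, which already makes the single-coordinate marginals of $\widetilde{\mP}_0$ and $\widetilde{\mP}_1$ differ.

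For the second claim, I would compute $\tfrac1n D(\widetilde{\mP}_1\|\widetilde{\mP}_0)$ using the type-class structure. By \eqref{EQ:TYPE}, for any fixed $\sigma\in\ms$ the log-likelihood ratio satisfies $\log\frac{\widetilde{\mP}_1(X^n)}{\widetilde{\mP}_0(X^n)}=\log\mP_{1,\sigma}\bigl(T(\Pi_{X^n})\bigr)-\log\mP_{0,\sigma}\bigl(T(\Pi_{X^n})\bigr)$, a function of $X^n$ only through the empirical distribution $\Pi_{X^n}$, so $\tfrac1n D(\widetilde{\mP}_1\|\widetilde{\mP}_0)=\tfrac1n\,\mE_{\widetilde{\mP}_1}\bigl[\log\mP_{1,\sigma}(T(\Pi_{X^n}))-\log\mP_{0,\sigma}(T(\Pi_{X^n}))\bigr]$. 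Under $\widetilde{\mP}_1$ the empirical distribution concentrates on $\bm\alpha^T\bm P_1$ (Glivenko--Cantelli \cite{tucker1959}, in the triangular-array sense since $n_k/n\to\alpha_k$), and applying \eqref{eq:convert} path-by-path along this a.s.-convergent sequence of types gives $\tfrac1n\log\mP_{1,\sigma}(T(\Pi_{X^n}))\to -f_{\bm P_1}(\bm\alpha,\bm\alpha^T\bm P_1)=0$ and $\tfrac1n\log\mP_{0,\sigma}(T(\Pi_{X^n}))\to -f_{\bm P_0}(\bm\alpha,\bm\alpha^T\bm P_1)$, hence $\tfrac1n\log\frac{\widetilde{\mP}_1(X^n)}{\widetilde{\mP}_0(X^n)}\to f_{\bm P_0}(\bm\alpha,\bm\alpha^T\bm P_1)$ almost surely. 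To turn this into the claimed convergence in mean \eqref{EQ:KLCONVERGE}, I would exhibit a uniformly integrable envelope for $\tfrac1n\log(\widetilde{\mP}_1/\widetilde{\mP}_0)$ under $\widetilde{\mP}_1$, built from the crude bounds $\mP_{\theta,\sigma}(T(\Pi_{X^n}))\le1$, $|\ms|\le 2^{nH([n_1/n,\dots,n_K/n])}$, and a per-coordinate likelihood-ratio moment bound under $p_{1,k}$, and then pass to the limit by dominated convergence.

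The hard part will be the analytic plumbing in the second step rather than the conceptual content: verifying that $\tfrac1n\log(\widetilde{\mP}_1/\widetilde{\mP}_0)$ is uniformly integrable under $\widetilde{\mP}_1$ so the almost-sure limit upgrades to an $L^1$ limit, and being careful that \eqref{eq:convert} is applied legitimately along the realized (random) sequence of types $\Pi_{X^n}$, which converge to but never exactly equal $\bm\alpha^T\bm P_1$. A minor secondary point, needed only to make the ``$\sim$'' in the first claim meaningful, is to guarantee $0<D(\widetilde{\mP}_1\|\widetilde{\mP}_0)<\infty$, which is precisely where the assumption $\bm\alpha^T\bm P_0\ne\bm\alpha^T\bm P_1$ and per-group absolute continuity $p_{1,k}\ll p_{0,k}$ enter; everything else is bookkeeping on top of \eqref{EQ:AAA}, \eqref{EQ:BBB}, \eqref{EQ:TYPE}, \eqref{eq:convert}, and the cited change-detection asymptotics.
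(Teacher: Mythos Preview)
Your proposal is correct and uses the same ingredients as the paper: for the first claim, the reduction to the simple i.i.d.\ problem via the equalities achieved by $\tau^*$ in \eqref{EQ:AAA} followed by the classical first-order asymptotics (the paper cites Theorem~4 of \cite{lai1998information}); for \eqref{EQ:KLCONVERGE}, the representation \eqref{EQ:TYPE}, Glivenko--Cantelli concentration of $\Pi_{X^n}$ at $\bm\alpha^T\bm P_1$, and the type-class exponent underlying \eqref{eq:convert}. The only organizational difference is in the second part: rather than first establishing almost-sure convergence of $\tfrac1n\log(\widetilde{\mP}_1/\widetilde{\mP}_0)$ and then upgrading to $L^1$ via uniform integrability, the paper inserts the non-asymptotic type-class probability bounds $(n_k+1)^{-|\mathcal X|}e^{-n_kD(\cdot\|\cdot)}\le p_{\theta,k}^{\otimes n_k}(T_{n_k}(U_k))\le e^{-n_kD(\cdot\|\cdot)}$ directly inside the expectation $\tfrac1n\mE_{\widetilde{\mP}_1}[\cdot]$, splits according to whether $\Pi_{X^n}\in\mathcal B(\bm\alpha^T\bm P_1,\epsilon)$, and uses continuity of $f_{\bm P_\theta}(\bm\alpha,\cdot)$ on the inside piece while Glivenko--Cantelli kills the outside piece. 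This sidesteps the uniform-integrability verification you flag as the hard part, though in the finite-alphabet setting (where the $D(U_k\|p_{\theta,k})$ terms are uniformly bounded) the two arguments are equivalent.
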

\begin{proof}[Proof Sketch]
	It was shown in Section \ref{sec:exopmixture} that the mixture CuSum $\tau^*$ is exactly optimal for the QCD problem in Section \ref{sec:modelqcd}. Then, as $\gamma\rightarrow \infty$, we have that $\inf_{\tau: \text{WARL}\geq \gamma} \text{WADD}(\tau) = \text{WADD}(\tau^*)$.
	Further note that for the mixture CuSum $\tau^*$, $\tau^*$ achieves the equality in \eqref{EQ:AAA}. Then, we have that $\widetilde{\text{WADD}}(\tau^*)= {\text{WADD}}(\tau^*)$. Since $\tau^*$ is optimal for the simple QCD problem in \eqref{eq:newproblem}, from Theorem 4 in \cite{lai1998information},  as $\gamma\rightarrow \infty$, it follows that 
	\begin{flalign}
	\text{WADD}(\tau^*) = \widetilde{\text{WADD}}(\tau^*) \sim \frac{\log\gamma}{D\Big(\widetilde{\mP}_1\big|\big|\widetilde{\mP}_0\Big)}(1+o(1)).
	\end{flalign}
	The proof of \eqref{EQ:KLCONVERGE} can be found in Appendix \ref{sec:KL}.	
\end{proof}
By combining Propositions \ref{proposition:1} and \ref{proposition:2},  it can be seen that the tradeoff between the WADD and WARL for our computationally efficient test is close to the optimal one when $n$ is large. This demonstrates the advantage of our test that for large networks, it has a similar statistical efficiency comparing to the optimal test, and has a significantly reduced  computational complexity.

\section{Simulation Results}\label{sec:7numerical}
\subsection{Mixture CuSum Algorithm}
We first show an example evolution path of the mixture CuSum algorithm. Set $n=2$ and $K=2$, i.e., one sensor in each group. For group 1, the pre- and post-change distributions are $\mathcal N(0,1)$ and $\mathcal N(0.5,1)$, respectively. For group two, the pre- and post-change distribution are $\mathcal N(2,1)$ and $\mathcal N(1.5,1)$, respectively. 
In Fig.~\ref{fig:1}, we set the change point to be 500 and $b=5$. We plot one sample evolution path of the mixture CuSum algorithm. It can be seen that before the change point, the test statistic fluctuates around zero, and after the change point, it starts to increase with a positive drift.


We then compare our optimal mixture CuSum test with two other heuristic algorithms based on the Bayesian approach and the generalized likelihood ratio approach to tackling the unknown group assignments. For the  Bayesian approach, we pretend  that each sample comes from group $k$ with probability $n_k/n$, for $k=1,\ldots,K$, independently, so that {on average} the $k$-th group has $n_k$ sensors, although we actually have exact $n_k$ sensors in each group $k$.  We then compute the following likelihood ratio: 
\begin{flalign}\label{eq:bayes}
    l_b(x^n[t])  = \frac{\prod_{i=1}^n\left(\sum_{k=1}^K\frac{n_k}{n}p_{1,k}(x_i[t])\right)}{\prod_{i=1}^n\left(\sum_{k=1}^K\frac{n_k}{n}p_{0,k}(x_i[t])\right)}.    
\end{flalign}
The generalized likelihood ratio for the sample $x^n[t]$ is 
\begin{equation}\label{eq:glrt}
    l_g(x^n[t]) = \frac{\sup_{\sigma\in \ms}\mathbb{P}_{1,\sigma} (x^n[t])}{\sup_{\sigma\in \ms}\mathbb{P}_{0,\sigma} (x^n[t])}.
\end{equation}
We then design CuSum-type tests using \eqref{eq:bayes} and \eqref{eq:glrt}, which are referred to as Bayesian and Generalized CuSums. The test statistics of these three algorithms are all symmetric, and therefore for different $\Omega$, the average detection delay and average run length are the same. 

In Fig.~\ref{fig:2}, we plot the WADD as a function of the WARL. It can be seen  that our mixture CuSum algorithm outperforms the other two algorithms. Moreover, the relationship between the WADD and log of the WARL is linear. The slope of these three curves should be the reciprocal  of the expectation of the corresponding likelihood ratio under $\mP_{1,\sigma}$ for some $\sigma\in\ms$. Due to the fact that the distributions are continuous,  our computationally efficient test is not applicable here.

\begin{figure*}[!t]
	\begin{multicols}{3}
		\includegraphics[width=\linewidth]{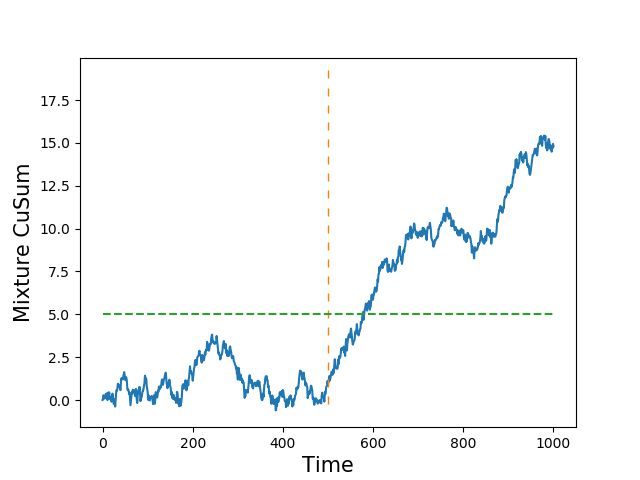}\par\caption{Evolution path of the Mixture CuSum.}\label{fig:1}
		\includegraphics[width=\linewidth]{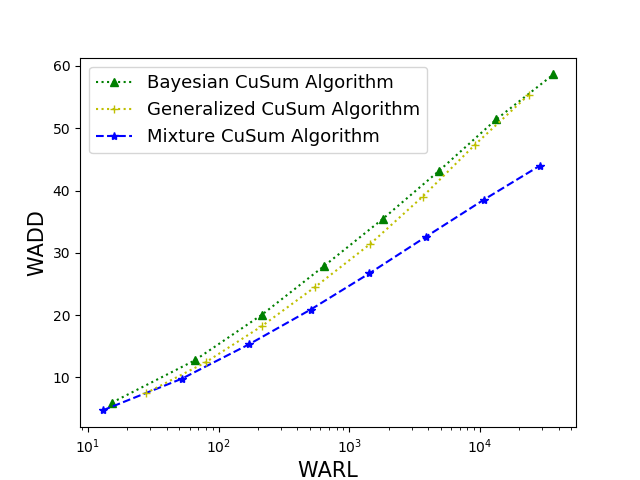}\par\caption{Comparison of the Bayesian CuSum algorithm, the Generalized CuSum algorithm and the Mixture CuSum algorithm: $n=2, K=2$}\label{fig:2}
		\includegraphics[width=\linewidth]{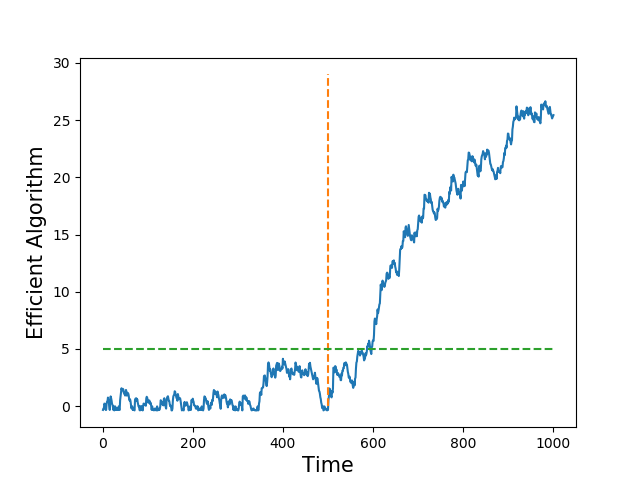}\par\caption{Evolution path of the computationally efficient algorithm: $n=2, K=2$.}\label{fig:6}
	\end{multicols}
	\begin{multicols}{3}
		\includegraphics[width=\linewidth]{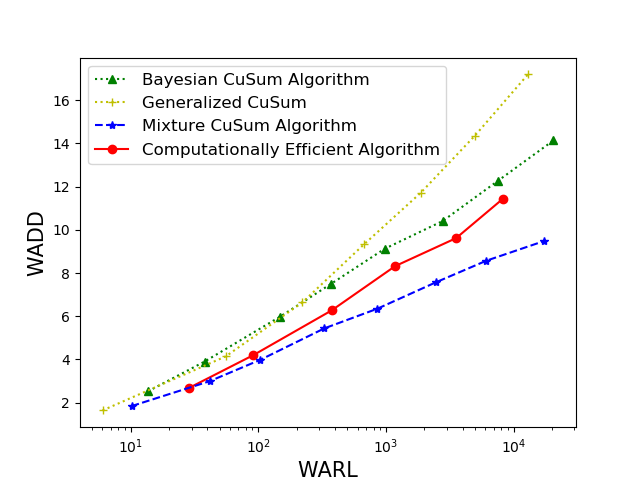}\par\caption{Comparison of the Bayesian CuSum algorithm, the Generalized CuSum algorithm, the Mixture CuSum algorithm and the computationally efficient algorithm: $n=2, K=2$.}\label{fig:3}
		\includegraphics[width=\linewidth]{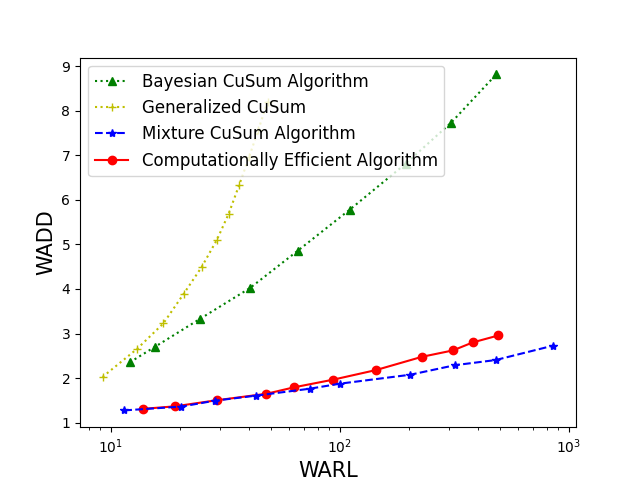}\par\caption{Comparison of the Bayesian CuSum algorithm, the Generalized CuSum algorithm, the Mixture CuSum algorithm and the computationally efficient algorithm: $n=8, K=2$.}\label{fig:4}
		\includegraphics[width=\linewidth]{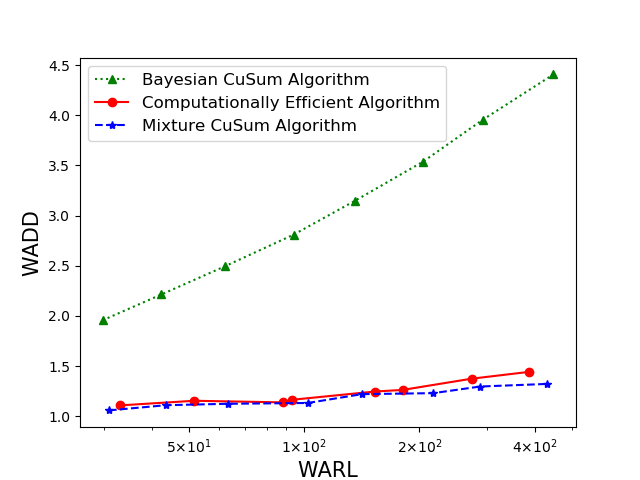}\par\caption{Comparison of the Bayesian CuSum algorithm, the Mixture CuSum algorithm and the computationally efficient algorithm: $n=20, K=2$.}\label{fig:5}
	\end{multicols}
    \begin{multicols}{3}
    	\includegraphics[width=\linewidth]{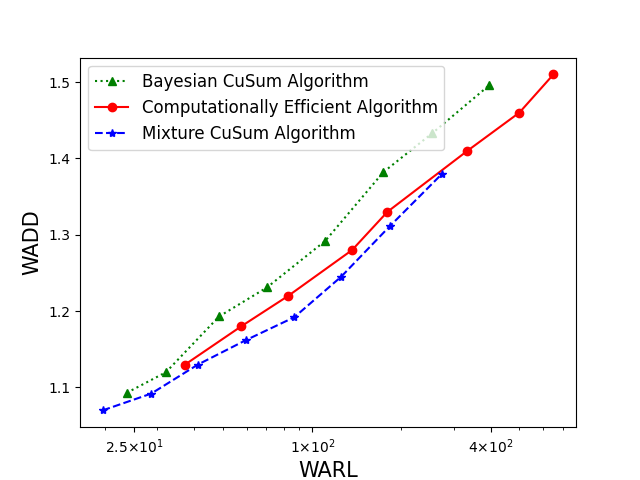}\par\caption{Comparison of the Bayesian CuSum algorithm, the Mixture CuSum algorithm and the computationally efficient algorithm: $n=10, K=4$.}\label{fig:7}
    	\includegraphics[width=\linewidth]{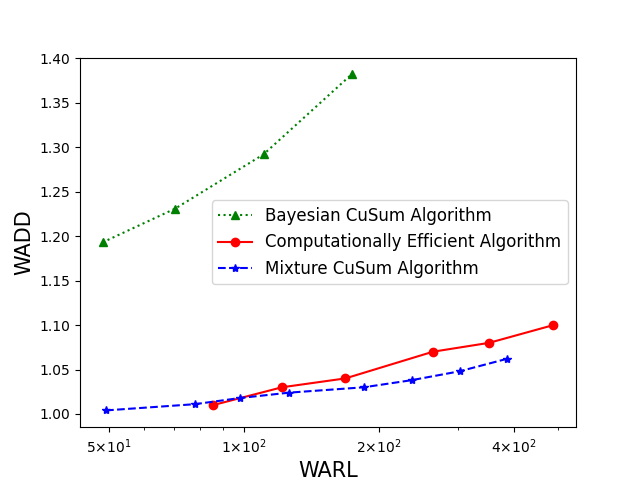}\par\caption{Comparison of the Bayesian CuSum algorithm, the Mixture CuSum algorithm and the computationally efficient algorithm: $n=18, K=4$.}\label{fig:8}
    	\includegraphics[width=\linewidth]{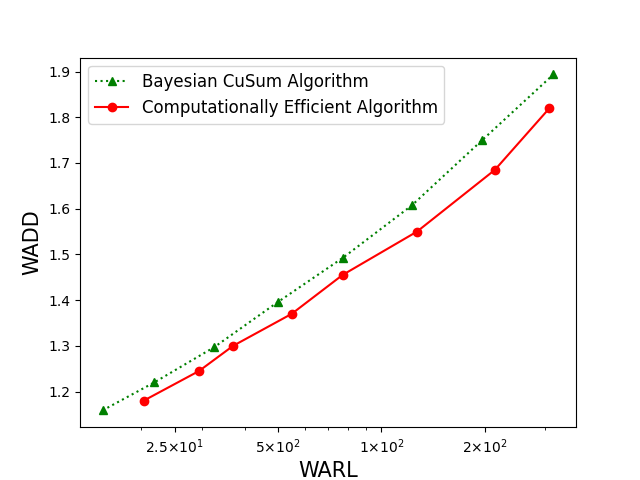}\par\caption{Comparison of the Bayesian CuSum algorithm and the computationally efficient algorithm: $n=100, K=4$.}\label{fig:10}
    \end{multicols}
    
\end{figure*}

\subsection{Computationally Efficient Algorithm}
For the computationally efficient algorithm, we first consider a simple example with $n=2$, $K=2$, $n_1=1$ and $n_2=1$. The pre- and post-change distributions for group 1 are binomial distribution $\mathcal B(10,0.5)$ and $\mathcal B(10,0.3)$, respectively, and for group 2 are $\mathcal B(10,0.5)$ and $\mathcal B(10,0.7)$, respectively. We plot a sample evolution path of the efficient algorithm in Fig.~\ref{fig:6}. Similar to the  mixture CuSum, before the change point, the test statistic fluctuates around zero, and after the change point, it starts to increase with a positive drift. 

We then compare the performance of our efficient algorithm with the optimal mixture CuSum algorithm, the Bayesian CuSum algorithm and the Generalized CuSum algorithm, and repeat the experiment for $n=8$, $n_1=4$, $n_2=4$ and $n=20$, $n_1 = 10, n_2 = 10$ with the same distributions.

For the three cases with $n=2$, $n=8$ and $n=20$, we plot the WADD as a function of the WARL in Figs.~\ref{fig:3}, Fig.~\ref{fig:4} and Fig.~\ref{fig:5}. It can be seen that mixture CuSum outperforms the other three tests, and our computationally efficient test has a better performance than the intuitive Bayesian CuSum and Generalized CuSum. For the case with $n=20$, $n_1 = 10, n_2 = 10$, the performance of the Generalized CuSum algorithm is much worse than the other three algorithms,  therefore
 is not included in Fig.~\ref{fig:5}. More importantly, comparing Fig.~\ref{fig:3}, Fig.~\ref{fig:4} and Fig.~\ref{fig:5}, we can see that as $n$ increases, the slope of the WADD-WARL tradeoff curve of the efficient algorithm gets closer to the one of the optimal mixture CuSum algorithm.
This conforms to the design of our computationally efficient test which aims to approximate the optimal mixture CuSum when $n$ is large, and our theoretical discussion in Propositions \ref{proposition:1} and \ref{proposition:2}.

We then consider the case with $K=4$. The pre- and post-change distributions for group 1 are $\mathcal B(10,0.5)$ and $\mathcal B(10,0.3)$, respectively, for group 2 are $\mathcal B(10,0.5)$ and $\mathcal B(10,0.7)$, respectively, for group 3 are $\mathcal B(10,0.5)$ and $\mathcal B(10,0.25)$, respectively, for group 4 are $\mathcal B(10,0.5)$ and $\mathcal B(10,0.75)$, respectively. In Fig.~\ref{fig:7}, we plot the WADD as a function of the WARL with $n = 10, n_1 = n_2 = n_3 = 2$ and $n_4 = 4$. In Fig.~\ref{fig:8}, we plot the WADD as a function of the WARL with $n = 18, n_1 = n_2 = n_3 = 3$ and $n_4 = 9$. From Fig.~\ref{fig:7} and Fig.~\ref{fig:8}, it can be seen that the mixture CuSum has the best performance, and our computationally efficient algorithm outperforms other heuristic algorithms, and is close to the optimal mixture CuSum algorithm. We also compare the performance of Bayesian CuSum algorithm and the efficient algorithm when $n=100$ and $K=4$. We note that for the case $n=100$, the computational complexity is too high for the mixture CuSum algorithm to be simulated. We set $n_1 = n_2 = n_3 = 20$ and $n_4 = 40$, and the pre- and post-change distributions for group 1 are $\mathcal B(10,0.5)$ and $\mathcal B(10,0.4)$, respectively, for group 2 are $\mathcal B(10,0.5)$ and $\mathcal B(10,0.45)$, respectively, for group 3 are $\mathcal B(10,0.5)$ and $\mathcal B(10,0.35)$, respectively, for group 4 are $\mathcal B(10,0.5)$ and $\mathcal B(10,0.6)$, respectively. From Fig.~\ref{fig:10}, it can be seen that the performance of our efficient algorithm is better than the Bayesian CuSum algorithm.

In Fig.~\ref{fig:9}, we show the computational efficiency of our proposed algorithms. Specifically, we compare the running time of computing one step update of our computationally efficient algorithm and the optimal mixture CuSum algorithm (on Intel Core i5-8265U CPU). From Fig.~\ref{fig:3}, one can see that as $n$ increases, the running time of the mixture CuSum increases exponentially, while the running time of our computationally efficient test  stays almost the same. 

\begin{figure}[htbp]
	\includegraphics[width=3.2in]{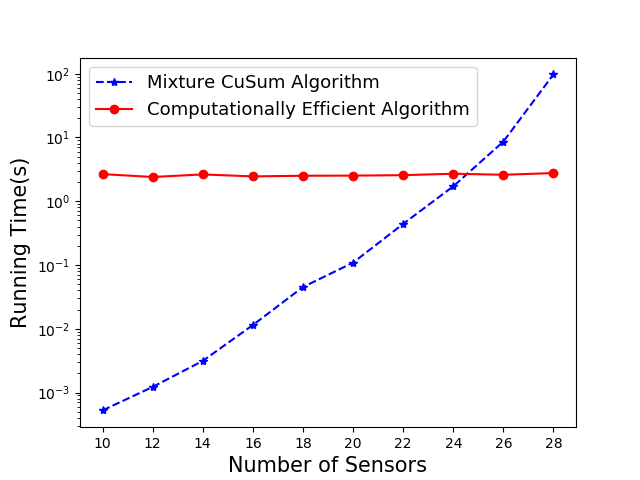}\par\caption{Comparison of the computational complexity between the Mixture CuSum algorithm and the computationally efficient algorithm.}\label{fig:9}
\end{figure}

\section{Conclusion}\label{sec:8conclusion}
In this paper, we studied the statistical inference problem in anonymous heterogeneous sensor networks. We first revisited the  non-sequential setting studied in \cite{anonymous}, and provided a simple optimality proof for the MLRT. We then extended our approach to the problem of QCD with anonymous heterogeneous sensors, and constructed a mixture CuSum algorithm. We showed that the mixture CuSum algorithm is optimal under Lorden's criterion \cite{lorden1971procedures}. We note that asymptotic optimality results can also be obtained under  Pollak's criterion \cite{pollak1985}. Although being optimal, our mixture CuSum algorithm is computationally expensive when the number of sensors is large. We then proposed a computationally efficient algorithm with a novel recursive update rule of the change point estimate and the test statistic. We further developed its WARL lower bound for practical false alarm control. Our numerical results showed that the mixture CuSum algorithm has the best performance and the computationally efficient algorithm also outperforms other heuristic algorithms. Moreover, when the number of sensor is large, the computationally efficient algorithm is much more efficient than the optimal mixture CuSum algorithm. Our results provide useful tools and insights to investigate various kinds of statistical inference problems in anonymous networks.


One possible extension  is to the case where the samples in different time steps are not independent \cite{lai1998information}. It is also of interest to investigate when samples are quantized and sensors can only receive binary codewords \cite{cheng2029multi}. In this case, such quantizing measurement should be incorporated into the design of mixture CuSum algorithm. Moreover, in this paper, it is assumed that after the change all the sensors change their data-generating distributions simultaneously. Therefore, another possible future direction is to consider the case where only an unknown subset of sensors are affected by the change.  Moreover, the change may also be dynamic and propagate following some unknown pattern.  In many practical applications, the data-generating distributions may not be available beforehand, and data-driven approaches in anonymous heterogeneous networks need to be developed.

\appendices

\section{Proof of \eqref{EQ:AAA}}\label{proff}
We construct a new sequence of random variables $\{\widehat{X}^n[t]\}_{t=1}^\infty$. Before the change point, $\widehat{X}^n[t]$ are i.i.d. according to the mixture distribution $\widetilde{\mP}_0=\frac{1}{\mid \ms\mid}\sum_{\sigma\in \ms}\mP_{0,\sigma}$. After the change point, $\widehat{X}^n[t]$ follows the distribution $\mP_{1,\sigma_t}$ for some $\sigma_t\in\ms$. Specifically,
\begin{flalign}\label{eq:new}
\widehat{X}^n[t] \sim \left\{\begin{array}{ll}
\widetilde{\mP}_0, &\text{ if } t<\nu \\
\mP_{1,\sigma_t}, &\text{ if } t\geq \nu.
\end{array}\right.
\end{flalign}

For any stopping time $\tau$, define the worst-case average detection delay for the model in \eqref{eq:new} as follows:
\begin{flalign}
&\widehat{\text{WADD}}(\tau)\nn\\&=\sup\limits_{\nu\geq1}\sup_{\sigma_\nu,...,\sigma_\infty}\text{ess}\sup{\widehat{\mE}}^\nu_{\sigma_\nu,...,\sigma_\infty}[(\tau-\nu)^+|\widehat{\mathbf X}^n[1,\nu-1]],\nn
\end{flalign}
where $\widehat{\mE}^\nu_{\sigma_\nu,...,\sigma_\infty}$ denotes the expectation when the data is distributed according to \eqref{eq:new}.
To prove that $\text{WADD}(\tau)\geq \widetilde{\text{WADD}}(\tau)$, we will first show that ${\text{WADD}}(\tau) =\widehat{\text{WADD}}(\tau)$, and then show that $\widehat{\text{WADD}}(\tau) \geq \widetilde{\text{WADD}}(\tau)$. 

\textbf{Step 1.} Denote by $\mathcal{M}$ the collection of all $\{\sigma_1,...,\sigma_{\nu-1}\}$, and $\mu$ is an element in $\mathcal{M}$. Denote by $\mathcal{N}$ the collection of all $\{\sigma_\nu,...,\sigma_\infty\}$, and $\omega$ is an element in $\mathcal{N}$. Thus, $\Omega = \{\mu,\omega\}$. Then, the $\text{WADD}$ can be written as
\begin{flalign}
&\text{WADD}(\tau)\nn \\&= \sup\limits_{\nu\geq1}\sup_{\Omega}\text{ess}\sup{\mE}^\nu_{\Omega}[(\tau-\nu)^+|{\mathbf X}^n[1,\nu-1]]\nn \\&= \sup\limits_{\nu\geq1}\sup_{\omega\in\mathcal{N}}\sup\limits_{\mu\in\mathcal{M}}\text{ess}\sup{\mE}^\nu_{\omega}[(\tau-\nu)^+|{\mathbf X}^n[1,\nu-1]],
\end{flalign}
where $\mE^\nu_\omega$ denotes the expectation when change point is $\nu$, and after the change point, the data follows distribution $\prod_{t=\nu}^\infty\mP_{1,\sigma_t}$.
We note that $\widehat{X}^n[t]$ and $X^n[t]$, for $t\geq\nu$, have the same distribution $\mP_{1,\sigma_t}$. Therefore, the difference between $\text{WADD}$ and $\widehat{\text{WADD}}$ lies in that they take esssup with respect to different distributions, i.e., the distributions of $\mathbf{X}^n[1,\nu-1]$ and $\mathbf{\widehat{X}}^n[1,\nu-1]$ are different. Let $f_{\omega}(\mathbf {X}^n[1,\nu-1])$ denote ${\mE}^\nu_{\omega}[(\tau-\nu)^+|\mathbf{X}^n[1,\nu-1]]$.
Then, $\text{WADD}$ and $\widehat{\text{WADD}}$ can be written as
\begin{flalign}
\text{WADD}(\tau) &= \sup\limits_{\nu\geq1}\sup_{\omega\in\mathcal{N}}\sup_{\mu\in\mathcal{M}}\text{ess}\sup f_{\omega}({\mathbf X}^n[1,\nu-1]),\nn\\\widehat{\text{WADD}}(\tau) &= \sup\limits_{\nu\geq1}\sup_{\omega\in\mathcal{N}}\text{ess}\sup f_{\omega}(\widehat{\mathbf X}^n[1,\nu-1]).
\end{flalign}
It then suffices to show that 
$
\sup_{\mu\in\mathcal{M}}\text{ess}\sup f_{\omega}({\mathbf X}^n[1,\nu-1])=\text{ess}\sup f_{\omega}(\widehat{\mathbf X}^n[1,\nu-1]). 
$

For any $\omega \in \mathcal{N}$ and $\mu \in \mathcal{M}$, let
$
b_{\omega,\mu} = \text{ess}\sup f_\omega({\mathbf X}^n[1,\nu-1])=\inf\{b: \mP_{\mu}(f_{\omega}({\mathbf X}^n[1,\nu-1])>b)=0\},
$
where $\mP_\mu$ denotes the probability measure when the data is generated according to $\mP_{0,\sigma_1},...,\mP_{0,\sigma_{\nu-1}}$ before change point $\nu$.

Let $b^*_\omega = \text{ess}\sup f_\omega(\widehat{\mathbf{X}}^n[1,\nu-1])$. It can be shown that
\begin{flalign*}
b^*_\omega &=
\inf\bigg\{b:\int_{\textbf{x}^n[1,\nu-1]} \mathbbm{1}_{\{f_\omega(\textbf{x}^n[1,\nu-1])>b\}}\nn\\&\hspace{1cm}\times\mathrm{d} \prod_{t=1}^{\nu-1}\widetilde{\mP}_0(x^n(t))=0\bigg\}
\nn\\&=\inf\bigg\{b:\int_{\textbf{x}^n[1,\nu-1]} \mathbbm{1}_{\{f_\omega(\textbf{x}^n[1,\nu-1])>b\}}\nn\\&\hspace{1cm}\times\mathrm{d} \prod_{t=1}^{\nu-1}\frac{1}{\mid \ms \mid}\sum_{\sigma_t \in \ms}{\mP}_{0,\sigma_t}(x^n(t))=0\bigg\}
\nn\\&=\inf\bigg\{b:\int_{\textbf{x}^n[1,\nu-1]} \mathbbm{1}_{\{f_\omega(\textbf{x}^n[1,\nu-1])>b\}}\nn\\&\hspace{1cm}\times\mathrm{d}\frac{1}{\mid \mathcal{M}\mid}\sum_{\mu\in \mathcal{M}}{\mP}_{\mu}(\textbf{x}^n[1,\nu-1])=0\bigg\}
\nn\\&=\inf\bigg\{b: \frac{1}{\mid \mathcal{M}\mid}\sum_{\mu\in \mathcal{M}}{\mP}_{\mu}(f_{\omega}({{\mathbf X}}^n[1,\nu-1])>b)=0\bigg\}.
\end{flalign*}
It then follows that for any $\mu\in\mathcal{M}$,
\begin{flalign}
{\mP}_{\mu}(f_{\omega}({{\mathbf X}}^n[1,\nu-1])>b^*_\omega) = 0.
\end{flalign}
Therefore, for any $\mu\in\mathcal{M}$, we have that $b_{\omega,\mu}\leq b^*_\omega$. Then
\begin{flalign}\label{eq:achieve}
\sup\limits_{\mu\in\mathcal{M}} b_{\omega,\mu} \leq b^*_\omega.
\end{flalign}

Conversely, let $\sup_{\mu\in\mathcal{M}} b_{\omega,\mu}=b'$. For any $\mu\in\mathcal{M}$, we have $
\mP_{\mu} (f_{\omega}({\mathbf X}^n[1,\nu-1])>b')= 0.
$
Then,
$
\frac{1}{\mid \mathcal{M}\mid}\sum_{\mu \in \mathcal{M}}\mP_{\mu}(f_{\omega}({\mathbf X}^n[1,\nu-1])>b') = 0.
$
This further implies that 
\begin{flalign}\label{eq:converse}
b^*_\omega\leq b'=\sup\limits_{\mu\in\mathcal{M}} b_{\omega,\mu}.
\end{flalign}

Combining \eqref{eq:achieve} and \eqref{eq:converse}, we have that 
$
\sup_{\mu\in\mathcal{M}} b_{\omega,\mu}=b^*_\omega,
$
and thus
$
\sup_{\mu\in\mathcal{M}}\text{ess}\sup f_{\omega}({\mathbf X}^n[1,\nu-1])= \text{ess}\sup f_{\omega}(\widehat{\mathbf X}^n[1,\nu-1]).
$
This implies that 
\begin{flalign}\label{eq:first}
\text{WADD}(\tau) = \widehat{\text{WADD}}(\tau).
\end{flalign}

\textbf{Step 2.} The next step is to show that $\widehat{\text{WADD}}(\tau) \geq \widetilde{\text{WADD}}(\tau)$. We will first show that
$
\sup_{\omega\in\mathcal{N}}{\text{ess}\sup f_{\omega}(\widehat{\mathbf X}^n[1,\nu-1])}\geq {\text{ess}\sup\sup_{\omega\in\mathcal{N}} f_{\omega}(\widehat{\mathbf X}^n[1,\nu-1])}.
$
Denote by $\widetilde{\mP}^\nu$ the probability measure when the change is at $\nu$, the pre- and post-change distributions are $\widetilde{\mP}_0$ and $\widetilde{\mP}_1$, respectively. Let $\hat{b}=\sup_{\omega\in\mathcal{N}}{\text{ess}\sup f_{\omega}(\widehat{\mathbf X}^n[1,\nu-1])}$.
For any $\omega\in\mathcal{N}$, we have that
$
\widetilde{\mP}^\nu(f_{\omega}(\widehat{\mathbf X}^n[1,\nu-1])\geq \hat{b}) = 0.
$
Since $\mathcal{N}$ is countable, it then follows that 
\begin{flalign}
&\widetilde{\mP}^\nu\Big(\sup\limits_{\omega\in\mathcal{N}} f_{\omega}(\widehat{\mathbf X}^n[1,\nu-1])\geq \hat{b}\Big)\nn\\& \leq \widetilde{\mP}^\nu\Big(\cup_{\omega\in\mathcal{N}}\big\{f_\omega(\widehat{\mathbf X}^n[1,\nu-1])>\hat{b}\big\}\Big)\nn\\&\leq\sum_{\omega\in\mathcal{N}}\widetilde{\mP}^\nu\Big(f_\omega(\widehat{\mathbf X}^n[1,\nu-1])>\hat{b}\Big) = 0.
\end{flalign}
Therefore,
\begin{flalign}\label{eq:supp}
\hat{b}&=\sup\limits_{\omega\in\mathcal{N}} \text{ess}\sup f_{\omega}(\widehat{\mathbf X}^n[1,\nu-1]) \nn\\&\geq \text{ess}\sup\sup\limits_{\omega\in\mathcal{N}} f_{\omega}(\widehat{\mathbf X}^n[1,\nu-1]).
\end{flalign}

Before the change point $\nu$, $\widehat{X}^n[t]$ and $\widetilde{X}^n[t]$ follow the same distribution. For any $T\geq\nu+1$, we have that
\begin{flalign}
&\sup_{\substack{\{\sigma_\nu,...,\sigma_T\}\\\in\ms^{\bigotimes(T-\nu+1)}}}\sum_{t=\nu+1}^T (t-\nu)\mP_{\sigma_\nu,...,\sigma_T}(\tau=t|\widehat{\mathbf{X}}^n[1,\nu-1])\nn\\
&\geq \sum_{t=\nu+1}^T (t-\nu)\frac{1}{\mid \ms \mid^{(T-\nu+1)}}\nn\\
&\hspace{0.5cm}\times\sum\limits_{\substack{\{\sigma_\nu,...,\sigma_T\}\\\in\ms^{\bigotimes(T-\nu+1)}}}\mP_{\sigma_\nu,...,\sigma_T}(\tau=t|\widehat{\mathbf{X}}^n[1,\nu-1])\nn\\&=\sum_{t=\nu+1}^T (t-\nu)\widetilde{\mP}^\nu(\tau=t|\widetilde{\mathbf{X}}^n[1,\nu-1]).
\end{flalign}
As $T\rightarrow\infty$, we have that
\begin{flalign}\label{eq:ave}
\widehat{\mE}^\nu_{\omega}[(\tau-\nu)|\widehat{\mathbf X}^n[1,\nu-1]]\geq
\widetilde{\mE}^\nu[(\tau-\nu)|\widetilde{\mathbf X}^n[1,\nu-1]],
\end{flalign}
where $\mP_{\sigma_\nu,...,\sigma_T}$ denotes the probability measure when the observations from time $\nu$ to time $T$ are generated according to ${\mP}_{\sigma_\nu},...,{\mP}_{\sigma_T}$.

From \eqref{eq:supp} and \eqref{eq:ave}, we have that
\begin{flalign}\label{eq:second}
&\widehat{\text{WADD}}(\tau) \nn\\&= \sup\limits_{\omega\in\mathcal{N}} \text{ess}\sup f_{\omega}(\widehat{\mathbf X}^n[1,\nu-1])\nn\\&\geq \text{ess}\sup\widetilde{\mE}^\nu[(\tau-\nu)^+|\widetilde{\mathbf X}^n[1,\nu-1]]\nn\\&=
\widetilde{\text{WADD}}(\tau).
\end{flalign}

Combining \eqref{eq:first} and \eqref{eq:second}, it follows that $\text{WADD}(\tau) = \widehat{\text{WADD}}(\tau) \geq \widetilde{\text{WADD}}(\tau)$. Similarly, it can be shown that $\text{WARL}(\tau) \leq \widetilde{\text{ARL}}(\tau)$. This concludes the proof.

\section{$\tau^*$ achieves equality in \eqref{EQ:AAA}}\label{sec:appachiev}
We will show that the mixture CuSum algorithm achieves the equality in \eqref{EQ:AAA}, i.e.,
\begin{flalign}\label{eq:opt}
\widehat{\text{WADD}}(\tau^*) = \widetilde{\text{WADD}}(\tau^*).
\end{flalign}

For any $\{\sigma_\nu,...,\sigma_i,...,\sigma_\infty\}$, consider another element in $\mathcal{N}$, $\{\sigma_\nu,...,\sigma'_i,...\sigma_\infty\}$. Due to the fact that $\tau^*$ is symmetric, it follows that for any $i\geq\nu$, and any $\sigma_i,\sigma'_i\in\ms$, 
\begin{flalign}\label{eq:ttt}
&\text{ess}\sup\widehat{\mE}^\nu_{\sigma_\nu,...,\sigma_i,...,\sigma_\infty}[(\tau^*-\nu)^+|\widehat{\mathbf X}^n[1,\nu-1]]\nn \\&= \text{ess}\sup\widehat{\mE}^\nu_{\sigma_\nu,...,\sigma'_i,...,\sigma_\infty}[(\tau^*-\nu)^+|\widehat{\mathbf X}^n[1,\nu-1]].
\end{flalign}
Therefore,
$\widehat{\text{WADD}}(\tau^*)$ doesn't depend on $\omega$, which further implies that
\begin{flalign}\label{eq:sym}
&\sup_{\omega\in\mathcal{N}}\text{ess}\sup\widehat{\mE}^\nu_{\omega}[(\tau^*-\nu)^+|\widehat{\mathbf X}^n[1,\nu-1]]\nn\\&=\text{ess}\sup\widehat{\mE}^\nu_{\omega}[(\tau^*-\nu)^+|\widehat{\mathbf X}^n[1,\nu-1]].
\end{flalign}

For any $T\geq\nu+1$, we have that
\begin{flalign}
&\sup_{\substack{\{\sigma_\nu,...,\sigma_T\}\\\in\ms^{\bigotimes(T-\nu+1)}}}\sum_{t=\nu+1}^T (t-\nu)\mP_{\sigma_\nu,...,\sigma_T}(\tau^*=t|\widehat{\mathbf{X}}^n[1,\nu-1])\nn\\&
= \sum_{t=\nu+1}^T (t-\nu)\frac{1}{\mid \ms \mid^{(T-\nu+1)}}\nn\\&\quad\times\sum\limits_{\substack{\{\sigma_\nu,...,\sigma_T\}\\\in\ms^{\bigotimes(T-\nu+1)}}}\mP_{\sigma_\nu,...,\sigma_T}(\tau^*=t|\widehat{\mathbf{X}}^n[1,\nu-1])\nn\\&
=\sum_{t=\nu+1}^T (t-\nu)\widetilde{\mP}^\nu(\tau^*=t|\widetilde{\mathbf{X}}^n[1,\nu-1]).
\end{flalign}
As $T\rightarrow\infty$, we have that
\begin{flalign}\label{eq:equal}
&\widehat{\mE}^\nu_{\omega}[(\tau^*-\nu)^+|\widehat{\mathbf X}^n[1,\nu-1]]= \widetilde{\mE}^\nu[(\tau^*-\nu)^+|\widetilde{\mathbf X}^n[1,\nu-1]]. 
\end{flalign}
From \eqref{eq:sym} and \eqref{eq:equal}, it follows that
\begin{flalign}
&\widehat{\text{WADD}}(\tau^*) \nn\\&= \sup_{\nu\geq 1}\sup_{\omega\in\mathcal{N}}\text{ess}\sup\widehat{\mE}^\nu_{\omega}[(\tau^*-\nu)^+|\widehat{\mathbf X}^n[1,\nu-1]]\nn\\&=\sup_{\nu\geq1} \text{ess}\sup\widehat{\mE}^\nu_{\omega}[(\tau^*-\nu)^+|\widetilde{\mathbf X}^n[1,\nu-1]]\nn\\&= \sup_{\nu\geq1}\text{ess}\sup\widetilde{\mE}^\nu[(\tau^*-\nu)^+|\widetilde{\mathbf X}^n[1,\nu-1]]\nn\\&= \widetilde{\text{WADD}}(\tau^*).
\end{flalign}
Similarly, it can be shown that $\widetilde{\text{ARL}}(\tau^*) = \text{WARL}(\tau^*)$.

\section{Proof of Theorem \ref{THEOREM:1}}\label{sec: appc}
Let $Y = \text{inf}\{t \geq 1 : \widehat{W}[t] \leq 0\}$ be the first regeneration time.
For any $\Omega$ and $m\geq 1$, we have that 
\begin{flalign}
&\mP^\infty_\Omega(Y>m)= \mP^\infty_\Omega\big(\widehat{W}[t]>0, \forall t \in [1,m]\big)\nn\\&\leq \mP_\Omega^\infty\Big(nm\big[f_{\bm{P}_0}(\bm{\alpha},\Pi_{\mathbf X^n[1,m]})-f_{\bm{P}_1}(\bm{\alpha},\Pi_{\mathbf X^n[1,m]})\big]>0\Big)\nn.
\end{flalign}
Let $\Gamma \triangleq \{\mu\in\mathcal{P}_{\mathcal{X}}|f_{\bm{P}_0}(\bm{\alpha},\mu) > f_{\bm{P}_1}(\bm{\alpha},\mu)\}$. We have that 
\allowdisplaybreaks
\begin{align}\label{eq:expbound}
&\mP_\Omega^\infty\Big(nm\big[f_{\bm{P}_0}(\bm{\alpha},\Pi_{\mathbf X^n[1,m]})-f_{\bm{P}_1}(\bm{\alpha},\Pi_{\mathbf X^n[1,m]})\big]>0\Big) \nn\\& = \mP_\Omega^\infty \big\{\Pi_{\mathbf X^n[1,m]} \in \Gamma\big\}\nn\\& = \sum_{\substack{(U_1,...,U_K)\in\mathcal{P}_{mn_1}\times ... \times \mathcal{P}_{mn_K}\\ \bm\alpha^T\bm{U} \in \Gamma}}\prod_{k=1}^K p_{0,k}^{\bigotimes mn_k}\big(T_{mn_k}(U_k)\big)\nn\\& \leq \sum_{\substack{(U_1,...,U_K)\in\mathcal{P}_{mn_1}\times ... \times \mathcal{P}_{mn_K}\\ \bm\alpha^T\bm{U} \in \Gamma}} e^{-\sum_{k=1}^K mn_kD(U_k||p_{0,k})}\nn\\&\leq \bigg(\prod_k|\mathcal{P}_{mn_k}|\bigg)\nn\\&\hspace{0.2cm}\cdot\exp\Bigg(-\hspace{-0.2cm}\inf\limits_{\substack{(U_1,...,U_K)\in\mathcal{P}_{mn_1}\times ... \times \mathcal{P}_{mn_K}\\ \bm\alpha^T\bm{U} \in \Gamma }}\sum_{k=1}^K mn_k D(U_k||p_{0,k})\Bigg)\nn\\&\leq \bigg(\prod_k|\mathcal{P}_{mn_k}|\bigg)e^{-hm},
\end{align}
where the last step is due to the fact that $\mathcal{P}_{mn_k}\subseteq \mathcal{P}_\mathcal{X}$, $\forall 1\leq k\leq K$.
Note that $f_{\bm{P}}(\bm{\alpha}, Q) \geq 0$ for any $Q$ and the equality holds if and only if $\bm\alpha^T\bm P = Q$ almost everywhere. We then have that $\bm\alpha^T\bm P_0 \notin\Gamma$ and $h>0$. Therefore, for any $\Omega$ and $m\geq 1$,
\begin{flalign}\label{eq:effarlpart1}
\mP^\infty_\Omega(Y>m) \leq \bigg(\prod_k|\mathcal{P}_{mn_k}|\bigg) e^{-mh}.
\end{flalign}

Define regeneration times $Y_0 = 0$ and for $r\geq 0$, $Y_{r+1} = \inf\big\{t>Y_r : \widehat{W}[t] \leq 0\big\}$. 
Let 
$
R = \inf \{r: Y_r \leq \infty\ \text{and}\ \widehat{W}[t]\geq b\ \text{for some}\ Y_r < t\leq Y_{r+1}\}
$
denote the index of the first cycle in which $\widehat{W}[t]$ crosses $b$. Note that according to the recursive update rule of $\hat{\nu}_t$ and $\widehat{W}[t]$, the test statistics in cycle $r+1$ are independent of the samples in cycles $1,\cdots,r$.
For any $\Omega$, we have that
\begin{flalign}
\mE_\Omega^\infty [\tau_e(b)] \geq \mE_\Omega^\infty [R] = \sum_{r=0}^\infty\mP_\Omega^\infty(R\geq r).
\end{flalign}
	
For any $\Omega$ and $m\geq 1$, we have that 
\begin{flalign}\label{eq:stopfirst}
&\mP_\Omega^\infty(\tau_e(b) < Y)\nn\\& = \mP_\Omega^\infty(\tau_e(b) < Y, Y\leq m) + \mP_\Omega^\infty(\tau_e(b) < Y, Y > m) \nn\\&\leq \mP_\Omega^\infty(\tau_e(b) < m) + \mP_\Omega^\infty(Y>m).
\end{flalign}
	
Consider the first term in \eqref{eq:stopfirst} $\mP_\Omega^\infty(\tau_e(b) < m)$:
\begin{flalign*}
&\mP_\Omega^\infty\big(\tau_e(b) < m\big) 
= \mP_\Omega^\infty\Big(\max_{1\leq t< m} \widehat{W}[t] \geq b\Big) \nn\\& \leq \sum_{1\leq t <m}\mP_\Omega^\infty\Big(\widehat{W}[t]\geq b\Big)\nn\\
& = \hspace{-0.3cm}\sum_{1\leq t <m}\hspace{-0.2cm}\mP_\Omega^\infty\Big(n\hat{t}\big[f_{\bm{P}_0}(\bm{\alpha},\Pi_{\mathbf X^n[\hat{\nu}_t, t]})-f_{\bm{P}_1}(\bm{\alpha},\Pi_{\mathbf X^n[\hat{\nu}_t, t]})\big]\hspace{-0.1cm}\geq b\Big).
\end{flalign*}
Define $\Gamma_{b,t}\triangleq \Big\{\mu\in\mathcal{P}_{\mathcal{X}} \big| n\hat{t}\big[f_{\bm{P}_0}(\bm{\alpha},\mu) - f_{\bm{P}_1}(\bm{\alpha},\mu)\big]\geq b\Big\}$. For all $\mu \in \Gamma_{b,t}$, we have that 
$
n\hat{t}f_{\bm{P}_0}(\bm{\alpha},\mu) \geq b + n\hat{t}f_{\bm{P}_1}(\bm{\alpha},\mu) \geq b,
$
where the last inequality is due to the facts that $\hat{t}\geq 0$ and $f_{\bm{P}_1}(\bm{\alpha},\mu) \geq 0$. 
For any $\Omega$ and $1\leq t<m$, following the same idea as the one in \eqref{eq:expbound}, we have that 
\begin{flalign}\label{eq:effarlpart2}
&\mP_\Omega^\infty\Big(n\hat{t}\big[f_{\bm{P}_0}(\bm{\alpha},\Pi_{\mathbf X^n[\hat{\nu}_t, t]})-f_{\bm{P}_1}(\bm{\alpha},\Pi_{\mathbf X^n[\hat{\nu}_t, t]})\big]>b\Big)\nn\\& =\mP_\Omega^\infty\Big\{\Pi_{\mathbf X^n[\hat{\nu}_t, t]}\in \Gamma_{b,t}\Big\}\nn\\& 
\leq\bigg(\prod_k|\mathcal{P}_{\hat{t}n_k}|\bigg)\nn\\&\hspace{0.2cm}\cdot\exp\Bigg(-\hspace{-0.2cm}\inf\limits_{\substack{(U_1,...,U_K)\in(\mathcal{P}_{\mathcal{X}})^K\\ \bm\alpha^T\bm{U} \in \Gamma_{b,t} }}\sum_{k=1}^K n_k\hat{t} D(U_k||p_{0,k})\Bigg)\nn\\&
\leq \bigg(\prod_k|\mathcal{P}_{mn_k}|\bigg)e^{-b}.
\end{flalign}
We then have that for any $\Omega$, 
\begin{flalign}
\mP_\Omega^\infty\big(\tau_e(b) < m\big) \leq m \left(\prod_k|\mathcal{P}_{mn_k}|\right)e^{-b}.
\end{flalign}
	
Let $m = \frac{b}{h}$. Combing \eqref{eq:effarlpart1} and \eqref{eq:effarlpart2}, we have that  
\begin{flalign}\label{eq:probound}
&\mP_\Omega^\infty(\tau_e(b) < Y)
\leq \Big(\frac{b}{h} + 1\Big) \bigg(\prod_k|\mathcal{P}_{\frac{b}{h}n_k}|\bigg)e^{-b}. 
\end{flalign}
	
It then follows that 
\begin{flalign}
&\mP_\Omega^\infty(R\geq r) = \mP_\Omega^\infty\Big(\widehat{W}[t]<b,\forall t\leq Y_r\Big)\nn\\&= \mP_\Omega^\infty\Big(\widehat{W}[t]<b,\forall Y_{m-1}\leq t\leq Y_m, \forall 1\leq m\leq r\Big)\nn\\&=\prod_{m=1}^r \mP_\Omega^\infty\Big(\widehat{W}[t]<b,\forall Y_{m-1}\leq t\leq Y_m\Big)\nn\\&\geq \Bigg(1-\Big(\frac{b}{h}+1\Big)\bigg(\prod_k|\mathcal{P}_{\frac{b}{h}n_k}|\bigg)e^{-b}\Bigg)^r,
\end{flalign}
where the second equality is due to \eqref{eq:probound} and the independence among the cycles\cite{asmussen2008applied}.
Therefore, for any $\Omega$,
\begin{flalign}
&\mE_\Omega^\infty [\tau_e(b)]\geq \sum_{r=0}^\infty \Bigg(1-\Big(\frac{b}{h}+1\Big)\bigg(\prod_k|\mathcal{P}_{\frac{b}{h}n_k}|\bigg)e^{-b}\Bigg)^r \nn\\& = \frac{e^{b}}{\Big(\frac{b}{h}+1\Big)\Big(\prod_k|\mathcal{P}_{\frac{b}{h}n_k}|\Big)}.
\end{flalign}
This completes the proof.

\section{Proof of \eqref{EQ:KLCONVERGE}}\label{sec:KL}
From \eqref{EQ:TYPE}, we have that for any $\sigma\in\ms$,
\begin{flalign}
&\log\frac{\widetilde{\mP}_1(X^n)}{\widetilde{\mP}_0(X^n)} =\log\frac{\mP_{1,\sigma}\big(T(\Pi_{X^n})\big)}{\mP_{0,\sigma}\big(T(\Pi_{X^n})\big)}.
\end{flalign}
Let $\mathcal{B}(\bm{\alpha}^T\bm{P}_\theta,\epsilon) = \big\{\mu\in\mathcal{P}_{\mathcal{X}}: \sup\limits_{x\in\mathcal{X}} \big|\mu(x)-\bm{\alpha}^T\bm{P}_\theta(x)\big| \leq \epsilon\big\}$ denote the ball centered at $\bm{\alpha}^T\bm{P}_\theta$ with radius $\epsilon > 0$. According to the Glivenko–Cantelli theorem \cite{tucker1959}, we then have that for any $\sigma\in\ms$ and $\epsilon > 0$,
\begin{flalign}
\lim_{n\rightarrow \infty}\mP_{\theta, \sigma}\Big\{\sup\limits_{x\in\mathcal{X}} \big|\Pi_{X^n}(x)-\bm{\alpha}^T\bm{P}_\theta(x)\big| > \epsilon\Big\} = 0.
\end{flalign} 
It then follows that for any $\sigma\in\ms$ and $\epsilon >0$,
\begin{flalign}\label{eq:typeconverge}
&\lim_{n\rightarrow \infty}\mP_{\theta, \sigma}\Big\{\Pi_{X^n}\notin \mathcal{B}(\bm{\alpha}^T\bm{P}_\theta,\epsilon) \Big\}\nn\\& = \lim_{n\rightarrow \infty}\mP_{\theta, \sigma}\Big\{\sup\limits_{x\in\mathcal{X}} \big|\Pi_{X^n}(x)-\bm{\alpha}^T\bm{P}_\theta(x)\big| > \epsilon\Big\} = 0.
\end{flalign}
It was shown in Lemma 5.3 in \cite{anonymous} that  $f_{\bm{P}_\theta}(\bm{\alpha},P)$ is a continuous function of $P$ for any $\theta \in \{0,1\}$. Therefore,  $f_{\bm{P}_0}(\bm{\alpha},P)-f_{\bm{P}_1}(\bm{\alpha},P)$ is a continuous function of $P$.
Then we have that for any $\epsilon>0$, there exists an $\eta(\epsilon) >0$ such that $\forall P \in \mathcal{B}(\bm{\alpha}^T\bm{P}_1,\epsilon)$, 
\begin{flalign}\label{eq:continuous}
&f_{\bm{P}_0}(\bm{\alpha},\bm{\alpha}^T\bm{P}_1)-\eta(\epsilon) <f_{\bm{P}_0}(\bm{\alpha}, P)-f_{\bm{P}_1}(\bm{\alpha},P)\nn\\& < f_{\bm{P}_0}(\bm{\alpha},\bm{\alpha}^T\bm{P}_1)+\eta(\epsilon),
\end{flalign}
where $\eta(\epsilon) \rightarrow 0$ as $\epsilon\rightarrow 0$.
We then have that 
\begin{flalign}\label{eq:exponent}
&\lim_{n\rightarrow\infty}\frac{1}{n}D\Big(\widetilde{\mP}_1\big|\big|\widetilde{\mP}_0\Big) \nn\\& = \lim_{n\rightarrow\infty}\frac{1}{n} \mE_{\widetilde{\mP}_1}\Big[\log\mP_{1,\sigma}\big(T(\Pi_{X^n})\big) - \log\mP_{0,\sigma}\big(T(\Pi_{X^n})\big) \Big]\nn\\& \overset{(a)}{\leq} \lim_{n\rightarrow\infty}\frac{1}{n} \mE_{\widetilde{\mP}_1}\Bigg[\log \bigg(\prod_k|\mathcal{P}_{n_k}|\bigg) - \log\bigg(\prod_{k=1}^K\frac{1}{(n_k+1)^{\mid \mathcal{X}\mid}}\bigg)\nn\\&\hspace{0.5cm} - \inf_{\substack{(U_1,...,U_K)\in\mathcal{P}_{n_1}\times ... \times \mathcal{P}_{n_K}\\ \bm\alpha^T\bm{U} = \Pi_{X^n}}}\sum_{k=1}^K n_k D(U_k||p_{1,k})\nn\\& \hspace{0.5cm} +\inf_{\substack{(U_1,...,U_K)\in\mathcal{P}_{n_1}\times ... \times \mathcal{P}_{n_K}\\ \bm\alpha^T\bm{U} = \Pi_{X^n}}}\sum_{k=1}^K n_kD(U_k||p_{0,k}) \Bigg]\nn\\&= \lim_{n\rightarrow\infty}\frac{1}{n} \widetilde{\mP}_1\big(\Pi_{X^n}\in \mathcal{B}(\bm{\alpha}^T\bm{P}_1,\epsilon)\big)\mE_{\widetilde{\mP}_1}\Bigg[\log \bigg(\prod_k|\mathcal{P}_{n_k}|\bigg) \nn\\&\hspace{0.5cm} - \inf_{\substack{(U_1,...,U_K)\in\mathcal{P}_{n_1}\times ... \times \mathcal{P}_{n_K}\\ \bm\alpha^T\bm{U} = \Pi_{X^n}}}\sum_{k=1}^K n_k D(U_k||p_{1,k})\nn\\& \hspace{0.5cm} +\inf_{\substack{(U_1,...,U_K)\in\mathcal{P}_{n_1}\times ... \times \mathcal{P}_{n_K}\\ \bm\alpha^T\bm{U} = \Pi_{X^n}}}\sum_{k=1}^K n_kD(U_k||p_{0,k})\nn\\&\hspace{0.5cm}- \log\bigg(\prod_{k=1}^K\frac{1}{(n_k+1)^{\mid \mathcal{X}\mid}}\bigg)\Bigg|\Pi_{X^n}\in \mathcal{B}(\bm{\alpha}^T\bm{P}_1,\epsilon)\Bigg]\nn\\&\hspace{0.5cm}+\lim_{n\rightarrow\infty}\frac{1}{n} \widetilde{\mP}_1\big(\Pi_{X^n}\notin \mathcal{B}(\bm{\alpha}^T\bm{P}_1,\epsilon)\big)\mE_{\widetilde{\mP}_1}\Bigg[\log \bigg(\prod_k|\mathcal{P}_{n_k}|\bigg) \nn\\&\hspace{0.5cm} - \inf_{\substack{(U_1,...,U_K)\in\mathcal{P}_{n_1}\times ... \times \mathcal{P}_{n_K}\\ \bm\alpha^T\bm{U} = \Pi_{X^n}}}\sum_{k=1}^K n_k D(U_k||p_{1,k})\nn\\& \hspace{0.5cm} +\inf_{\substack{(U_1,...,U_K)\in\mathcal{P}_{n_1}\times ... \times \mathcal{P}_{n_K}\\ \bm\alpha^T\bm{U} = \Pi_{X^n}}}\sum_{k=1}^K n_kD(U_k||p_{0,k})\nn\\&\hspace{0.5cm}- \log\bigg(\prod_{k=1}^K\frac{1}{(n_k+1)^{\mid \mathcal{X}\mid}}\bigg)\Bigg|\Pi_{X^n}\notin \mathcal{B}(\bm{\alpha}^T\bm{P}_1,\epsilon)\Bigg]\nn\\& \overset{(b)}{=} \lim_{n\rightarrow\infty}\frac{1}{n} \widetilde{\mP}_1\big(\Pi_{X^n}\in \mathcal{B}(\bm{\alpha}^T\bm{P}_1,\epsilon)\big)\mE_{\widetilde{\mP}_1}\Bigg[ \nn\\&\hspace{0.5cm} - \inf_{\substack{(U_1,...,U_K)\in\mathcal{P}_{n_1}\times ... \times \mathcal{P}_{n_K}\\ \bm\alpha^T\bm{U} = \Pi_{X^n}}}\sum_{k=1}^K n_k D(U_k||p_{1,k})\nn\\& \hspace{0.5cm} +\inf_{\substack{(U_1,...,U_K)\in\mathcal{P}_{n_1}\times ... \times \mathcal{P}_{n_K}\\ \bm\alpha^T\bm{U} = \Pi_{X^n}}}\sum_{k=1}^K n_kD(U_k||p_{0,k})\nn\\&\hspace{0.5cm}\Bigg|\Pi_{X^n}\in \mathcal{B}(\bm{\alpha}^T\bm{P}_1,\epsilon)\Bigg]\nn\\&\hspace{0.5cm} + \lim_{n\rightarrow\infty}\frac{1}{n} \widetilde{\mP}_1\big(\Pi_{X^n}\in \mathcal{B}(\bm{\alpha}^T\bm{P}_1,\epsilon)\big)\bigg(\log \Big(\prod_k|\mathcal{P}_{n_k}|\Big)\nn\\&\hspace{0.5cm}- \log\Big(\prod_{k=1}^K\frac{1}{(n_k+1)^{\mid \mathcal{X}\mid}}\Big)\bigg)\nn\\&\overset{(c)}{\leq} f_{\bm{P}_0}(\bm{\alpha},\bm{\alpha}^T\bm{P}_1) + \eta(\epsilon),
\end{flalign}
where the inequality (a) is due to the bound of the probability of type classes\cite{cover2006information}: $\frac{1}{(n_k+1)^{\mid \mathcal{X}\mid}}2^{-n_kD(U_k||p_{\theta,k})}\leq p_{\theta;k}^{\bigotimes n_k}\big(T_{n_k}(U_k)\big)\leq 2^{-n_kD(U_k||p_{\theta,k})}$, the equality (b) is due to the fact that $\lim_{n\rightarrow\infty}\widetilde{\mP}_1\big(\Pi_{X^n}\notin \mathcal{B}(\bm{\alpha}^T\bm{P}_1,\epsilon)\big) =0$ and the inequality (c) is due to \eqref{eq:continuous} and the fact that $\lim_{n\rightarrow\infty}\frac{1}{n} \widetilde{\mP}_1\big(\Pi_{X^n}\in \mathcal{B}(\bm{\alpha}^T\bm{P}_1,\epsilon)\big)\bigg(\log \Big(\prod_k|\mathcal{P}_{n_k}|\Big)- \log\Big(\prod_{k=1}^K\frac{1}{(n_k+1)^{\mid \mathcal{X}\mid}}\Big)\bigg) = 0$.

For the lower bound, following the same idea as in \eqref{eq:exponent}, we have that 
\begin{flalign}\label{eq:exponent1}
&\lim_{n\rightarrow\infty}\frac{1}{n}D\Big(\widetilde{\mP}_1\big|\big|\widetilde{\mP}_0\Big) \nn\\& \geq \lim_{n\rightarrow\infty}\frac{1}{n} \mE_{\widetilde{\mP}_1}\Bigg[\log\bigg(\prod_{k=1}^K\frac{1}{(n_k+1)^{\mid \mathcal{X}\mid}}\bigg)- \log \bigg(\prod_k|\mathcal{P}_{n_k}|\bigg)\nn\\&\hspace{0.5cm} - \inf_{\substack{(U_1,...,U_K)\in\mathcal{P}_{n_1}\times ... \times \mathcal{P}_{n_K}\\ \bm\alpha^T\bm{U} = \Pi_{X^n}}}\sum_{k=1}^K n_k D(U_k||p_{1,k})\nn\\& \hspace{0.5cm} +\inf_{\substack{(U_1,...,U_K)\in\mathcal{P}_{n_1}\times ... \times \mathcal{P}_{n_K}\\ \bm\alpha^T\bm{U} = \Pi_{X^n}}}\sum_{k=1}^K n_kD(U_k||p_{0,k}) \Bigg]\nn\\&\geq f_{\bm{P}_0}(\bm{\alpha},\bm{\alpha}^T\bm{P}_1) - \eta(\epsilon).
\end{flalign}
By \eqref{eq:exponent} and \eqref{eq:exponent1}, we have that 
\begin{flalign}
\lim_{n\rightarrow\infty}\frac{1}{n}D\Big(\widetilde{\mP}_1\big|\big|\widetilde{\mP}_0\Big) = f_{\bm{P}_0}(\bm{\alpha},\bm{\alpha}^T\bm{P}_1).
\end{flalign}

\normalem
\bibliographystyle{ieeetr}
\bibliography{QCD2}

\begin{IEEEbiography}[{\includegraphics[width=1in,height=1.25in,clip,keepaspectratio]{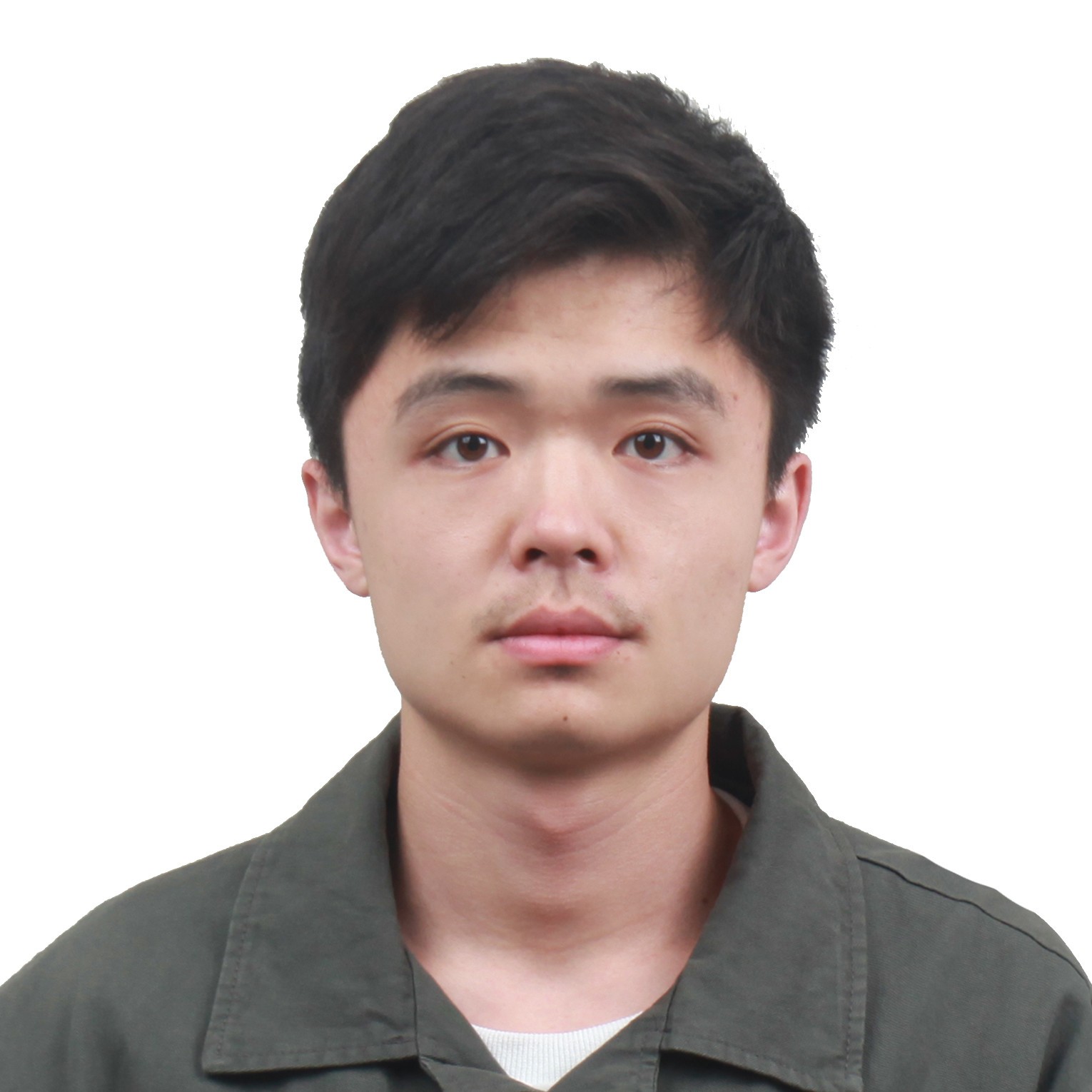}}]{Zhongchang Sun}(S'20) is a PhD student at the Department of Electrical Engineering, University at Buffalo, the State University of New York. He received the B.S. degree from Beijing Institute of Technology, Beijing, China in 2019. His research interests are on hypothesis testing, quickest change detection and distributionally robust optimization.
	
\end{IEEEbiography}

\begin{IEEEbiography}[{\includegraphics[width=1in,height=1.25in,clip,keepaspectratio]{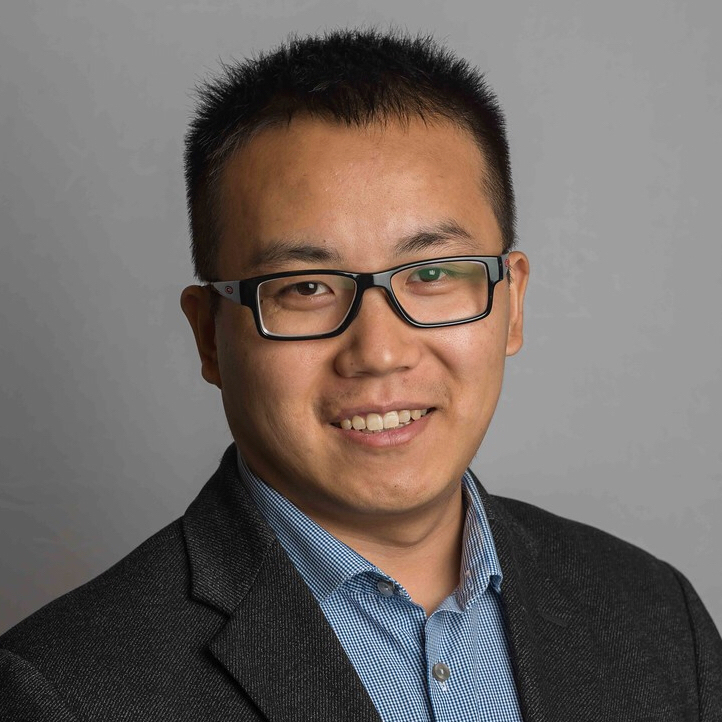}}]{Shaofeng Zou}(S'14-M'16) is an Assistant Professor, at the Department of Electrical Engineering, University at Buffalo, the State University of New York. He received the Ph.D. degree in Electrical and Computer Engineering from Syracuse University in 2016. He received the B.E. degree (with honors) from Shanghai Jiao Tong University, Shanghai, China, in 2011. He was a postdoctoral research associate at the Coordinated Science Lab, University of Illinois at Urbana-Champaign during 2016-2018. Dr. Zou's research interests include reinforcement learning, machine learning, statistical signal processing and information theory. He received the National Science Foundation CRII award in 2019. 
	
\end{IEEEbiography}

\begin{IEEEbiography}[{\includegraphics[width=1in,height=1.25in,clip,keepaspectratio]{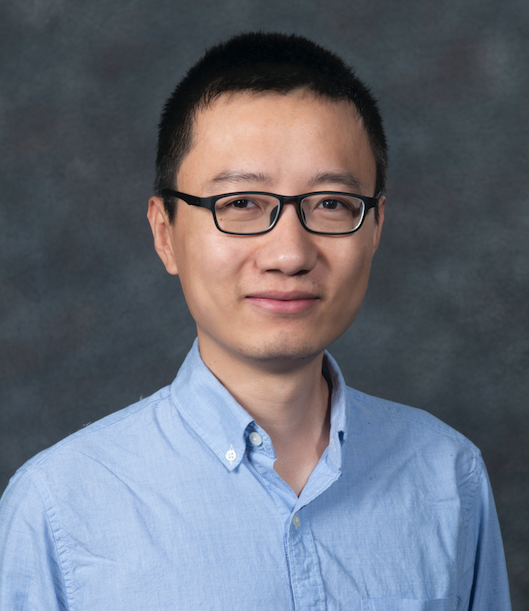}}]{Ruizhi Zhang} is an Assistant Professor in the Department of Statistics at University of Nebraska-Lincoln. He received his B.S. degree in Mathematics from Hua Loo-Keng Talent Program in Mathematics at University of Science and Technology of China (USTC) in 2014, graduated with honors. He received his Ph.D. degree in Statistics in the School of Industrial and Systems Engineering at Georgia Institute of Technology. His research interests include change-point detection, sequential analysis, robust statistics, high-dimensional statistical inference, functional data analysis.
\end{IEEEbiography}

\begin{IEEEbiography}[{\includegraphics[width=1in,height=1.25in,clip,keepaspectratio]{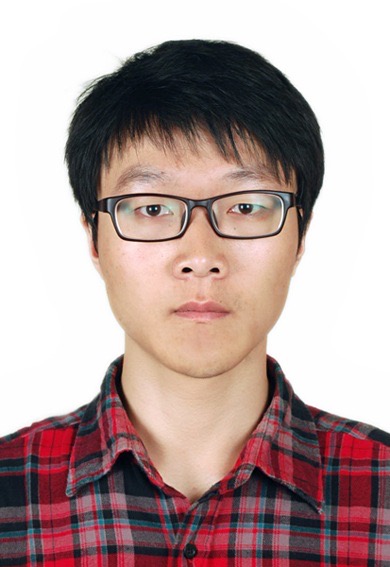}}]{Qunwei~Li}
(S'16) received the B.S.\ and M.S.\ degrees in electrical engineering with honors from Xidian University, Xi'an, China, in 2011 and 2014. He received the Ph.D. degree in electrical engineering from Syracuse University, Syracuse, NY, USA, in 2018.
He was a postdoctoral researcher in the Center for Applied Scientific Computing (CASC) at Lawrence Livermore National Laboratory (LLNL), CA, USA, from 2018 to 2019. His research interests include human decision making, adversarial deep learning, optimization algorithms, and recommender systems.
Dr. Li received the Syracuse University Graduate Fellowship Award in 2014 and the All University Doctoral Prize 2018 by Syracuse University for superior achievement in completed dissertations.
\end{IEEEbiography}

\end{document}